\newtheorem{theorem}{Theorem}[section]
\newtheorem{lemma}[theorem]{Lemma}
\newtheorem{corollary}[theorem]{Corollary}
\newtheorem{example}[theorem]{Example}
\theoremstyle{definition}
\newtheorem{definition}[theorem]{Definition}
\theoremstyle{remark}
\newtheorem{remark}[theorem]{Remark}
\newcommand{\G}{\Gamma}
\newcommand{\Gmag}{\Gamma^{\alpha_1,\ldots,\alpha_\beta}_{\mathrm{mag}}}
\newcommand{\Gmaga}{\Gamma^{\alpha}_{\mathrm{mag}}}
\newcommand{\Gm}{\Gamma_{\mathrm{mag}}}
\newcommand{\hmag}{h^{\alpha_1,\ldots,\alpha_\beta}_{\mathrm{mag}}}
\newcommand{\fmag}{f^{\alpha_1,\ldots,\alpha_\beta}_{\mathrm{mag}}}
\newcommand{\Gcut}{\Gamma_{\gamma_1,\ldots,\gamma_\beta}^{\mathrm{cut}}}
\newcommand{\hcut}{h_{\gamma_1,\ldots,\gamma_\beta}^{\mathrm{cut}}}
\newcommand{\fcut}{f_{\gamma_1,\ldots,\gamma_\beta}^{\mathrm{cut}}}
\newcommand{\Gcuta}{\Gamma_{\gamma}^{\mathrm{cut}}}
\newcommand{\Gc}{\Gamma^{\mathrm{cut}}}
\newcommand{\va}{\vec{\alpha}}
\newcommand{\vx}{\vec{x}}
\newcommand{\vz}{\vec{z}}
\newcommand{\Edges}{\mathcal{E}}
\newcommand{\Inert}{\mathcal{I}}
\newcommand{\zeros}{{\phi}}
\newcommand{\R}{\mathbb{R}}
\newcommand{\Reals}{\mathbb{R}}
\newcommand{\C}{\mathbb{C}}
\newcommand{\hg}{{\widetilde{\gamma}}}
\newcommand{\ha}{\widetilde{\alpha}}
\DeclareMathOperator{\sign}{sign}
\DeclareMathOperator{\Imag}{Im}
\renewcommand\Re{\operatorname{Re}}
\renewcommand\Im{\operatorname{Im}}
\title{Nodal count of graph eigenfunctions via magnetic perturbation}
\author{G. Berkolaiko}
\address{Department of Mathematics, Texas A\&M University, College
  Station, TX 77843-3368, USA}
\begin{document}

\begin{abstract}
  We establish a connection between the stability of an eigenvalue
  under a magnetic perturbation and the number of zeros of the
  corresponding eigenfunction.  Namely, we consider an eigenfunction
  of discrete Laplacian on a graph and count the number of edges where
  the eigenfunction changes sign (has a ``zero''). It is known that
  the $n$-th eigenfunction has $n-1+s$ such zeros, where the ``nodal
  surplus'' $s$ is an integer between 0 and the number of cycles on
  the graph.

  We then perturb the Laplacian by a weak magnetic field and view the
  $n$-th eigenvalue as a function of the perturbation.  It is shown
  that this function has a critical point at the zero field and that
  the Morse index of the critical point is equal to the nodal surplus
  $s$ of the $n$-th eigenfunction of the unperturbed graph.
\end{abstract}

\maketitle

%%%%%%%%%%%%%%%%%%%%%%%%%%%%%%%%%%%%%%%%%%%%%%%%%%
%%%%%%%%%%%%%%%%%%%%%%%%%%%%%%%%%%%%%%%%%%%%%%%%%%
\section{Introduction}
\label{sec:into}

Studying zeros of eigenfunctions is a question with rich history.
While experimental observations have been mentioned by Leonardo da
Vinci \cite{Leonardo}, Galileo \cite{Galileo} and Hooke \cite{Hooke},
and greatly systematized by Chladni \cite{Chladni}, the first
mathematical result is probably due to Sturm \cite{Stu_jmpa36a}.  The
Oscillation Theorem of Sturm states that the number of internal zeros
of the $n$-th eigenfunction of a Sturm-Liouville operator on an
interval is equal to $n-1$.  Equivalently, the zero of the $n$-th
eigenfunction divide the interval into $n$ parts.  In higher
dimensions, the latter equality becomes a one-sided inequality: Courant
\cite{Cou_ngwgmp23,CourantHilbert_volume1} proved that the zero curves
(surfaces) of the $n$-th eigenfunction of the Laplacian divide the
domain into at most $n$ parts (called the ``nodal domains'').

Recently, there has been a resurgence of interest in counting the
nodal domains of eigenfunctions, with many exciting conjectures and
rigorous results.  The nodal count seems to have universal features
\cite{BluGnuSmi_prl02,BogSch_prl02,NazSod_ajm09}, is conjectured to
resolve isospectrality \cite{GnuKarSmi_prl06}, and has connections to
minimal partitions of the domain
\cite{HelHofTer_aihp09,BerKucSmi_prep11}, to name but a few.  For a
selection of research articles and historical reviews, see the
dedicated volume \cite{SS07}.

On graphs, the question can be formulated regarding the signs of the
eigenfunctions of the operator
\begin{equation}
  \label{eq:H_graphs}
  H: \R^{|V|} \to \R^{|V|}, \qquad H = Q - C,
\end{equation}
where $V$ is the set of the vertices of the graph, $Q$ is an arbitrary
real diagonal matrix and $C$ is the adjacency matrix of the graph.  On
a graph, by a ``zero'' we understand an edge on which the
eigenfunction changes sign, and not the exceptional (with respect to
perturbation of $Q$) situation of an eigenfunction having a zero
entry.

The subject of sign changes and nodal domains (connected components of
the graph left after cutting the above edges) was addressed in, among
other sources, Fiedler \cite{Fie_cmj75a} who showed the analogue of
Sturm equality for \emph{tree graphs} (see also \cite{Biy_laa03}),
Davies, Gladwell, Leydold and Stadler \cite{DavGlaLeySta_laa01}, who
proved an analogue of Courant (upper) bound for the number of nodal
domains, Berkolaiko \cite{Ber_cmp08}, who proved a lower bound for
graphs with cycles and Oren \cite{Ore_jpa07}, who found a bound for
the nodal domains in terms of the chromatic number of the graph.  A
number of predictions regarding the nodal count in regular graphs
(assuming an adaptation of the random wave model) is put forward in
\cite{Elo_jpa08}.  For more information, the interested reader is
referred to the book \cite{BiyLeySta_book} and the review
\cite{BanOreSmi_pspm08}.

The study of magnetic Schr\"odinger operator on
graphs has a similarly rich history.  To give a sample, Harper
\cite{Har_ppsa55} used the tight-binding model (discrete Laplacian) to
describe the effect of the magnetic field on conduction (see also
\cite{Hof_prb76}).  In mathematical literature, discrete magnetic
Schr\"odinger operator was introduced by Lieb and Loss
\cite{LieLos_dmj93} and Sunada \cite{Sun_conf93,Sun_cm94}, and
studied, among other sources, in
\cite{Shu_cmp94,CdV_spectre,CdVToHTru_prep10} (see also
\cite{Sun_pspm08} for a review).

In this paper we present a surprising connection between the two
topics, namely the number of sign changes of $n$-th eigenfunction and the
behavior of the eigenvalue $\lambda_n$ under the perturbation of the
operator $H$ by a magnetic field.  To make a precise statement, we
need to introduce some notation.

Consider a generic eigenfunction on the graph, that is an
eigenfunction that corresponds to a simple eigenvalue and is nonzero
at the vertices.\footnote{This is the generic situation with respect to
  the perturbation of the potential $Q$.}  We denote by $\zeros_n$ the
number of \emph{sign changes} (also called \emph{sign flips}, hence
the notation $\zeros$) which are defined as the edges of the graph at
whose endpoints the eigenfunction has different signs.  Here $n$ is
the number of the eigenfunction in the sequence ordered according to
increasing eigenvalue.  The combined results of
\cite{Fie_cmj75a,Ber_cmp08,BerRazSmi_prep11} bound the number
$\zeros_n$ by
\begin{equation}
  \label{eq:nodal_bound_mu}
  n-1 \leq \zeros_n \leq n-1+\beta,
\end{equation}
where $\beta := |V|-|E|+1$ is the first Betti number (number of
independent cycles) of the graph.  Here and throughout the manuscript
we assume that the graph is connected.  We will call the
quantity
\begin{equation}
  \label{eq:nodal_surplus_def}
  \sigma_n = \zeros_n - (n-1), \qquad 0 \leq \sigma_n \leq \beta
\end{equation}
the \emph{nodal surplus}.  This is the extra number of sign changes that an
eigenfunction has due to the graph's non-trivial topology.

Magnetic field on discrete graphs has been introduced in, among other
sources, \cite{LieLos_dmj93,Sun_cm94,CdV_spectre}.  Up to unitary
equivalence, it can be specified using $\beta$ phases $\va =
(\alpha_j)_{j=1}^\beta \in (-\pi,\pi]^\beta$.  We consider the
eigenvalues of the graph as functions of the parameters $\va$.  The
zero phases, $\va = 0$, correspond to the graph $\G$ without the
magnetic field.  We are now ready to formulate our main result.

\begin{theorem}
  \label{thm:main}
  The point $\vec{\alpha} = 0$ is the critical point of the function
  $\lambda_n(\vec{\alpha})$.  If $\lambda_n(0)$, the $n$-th
  eigenvalue of the non-magnetic operator on $\G$, is simple and the
  corresponding eigenfunction has no zero entries, its nodal surplus
  $\sigma_n$ is equal to the Morse index --- the number of negative
  eigenvalues of the Hessian --- of $\lambda_n(\vec{\alpha})$ at the
  critical point $\vec{\alpha}=0$.
\end{theorem}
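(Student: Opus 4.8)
The plan is to handle the critical-point claim by a symmetry argument, and then to extract the Morse index from the second variation of the magnetic energy, which after one well-chosen substitution becomes a diagonal quadratic form whose signature is read off directly. It is convenient to keep all $|E|$ edge phases $\theta_e$ rather than the $\beta$ reduced fluxes: after a gauge transformation the operator $H(\va)$ is Hermitian and analytic in the phases, and complex conjugation reverses every phase, so $\overline{H(\va)}=H(-\va)$. A Hermitian matrix and its conjugate share the same (real) spectrum, hence $\lambda_n(\va)=\lambda_n(-\va)$ is even and $\nabla\lambda_n(0)=0$. (Equivalently, $\partial_j H|_0=\rmi B_j$ with $B_j$ real antisymmetric, so $\partial_j\lambda_n(0)=\rmi\,f^{T}B_j f=0$.) Gauge invariance makes $\lambda_n$ depend only on the $\beta$ fluxes, so the Hessian in the full edge-phase space has a $(|V|-1)$-dimensional kernel of gauge directions and its Morse index coincides with that of the reduced $\beta$-dimensional Hessian.

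Write $H_0=H(0)$, $\lambda=\lambda_n(0)$, and let $f$ be the real unit eigenfunction. Since $\lambda$ is simple the branch is analytic and one checks the eigenfunction perturbation is purely imaginary, $\partial_j f=\rmi g_j$ with $g_j$ real, $g_j\perp f$, solving $(H_0-\lambda)g_j=-B_j f$. Consequently the relevant second variation of $\mathcal E(g,\va)=\langle g,H(\va)g\rangle-\lambda\langle g,g\rangle$ at $(f,0)$, taken along imaginary eigenfunction variations $\rmi\phi$ (the magnetic term couples only to the imaginary part), is
\[
E^{(2)}(\phi,\theta)=\phi^{T}(H_{0}-\lambda)\phi+\sum_{e}c_{e}f_{u_{e}}f_{v_{e}}\,\theta_{e}^{2}+2\sum_{e}\theta_{e}\,c_{e}\bigl(f_{u_{e}}\phi_{v_{e}}-f_{v_{e}}\phi_{u_{e}}\bigr),
\]
where $e=(u_e,v_e)$ carries weight $c_e>0$. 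The eigenvalue Hessian is exactly the Schur complement of $E^{(2)}$ obtained by eliminating the eigenfunction variation $\phi$ (its critical value in $\phi$ reproduces the combination of the $g_j$). Letting $n_-(\cdot)$ denote the number of negative eigenvalues, Haynsworth inertia additivity applied to the $\phi$-block $(H_0-\lambda)|_{f^{\perp}}$, whose inertia is $(|V|-n,\,n-1,\,0)$, yields $n_-(\text{Hessian})=n_-\!\bigl(E^{(2)}\bigr)-(n-1)$.

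The decisive step is to diagonalize $E^{(2)}$. Here I would invoke the hypothesis that $f$ has no zero entries to perform the (inertia-preserving) substitution $\phi_v=f_v\eta_v$. The vertex equation $\sum_{e\ni v}c_e f_{\mathrm{other}}=(Q_v-\lambda)f_v$ then cancels the diagonal terms $(Q_v-\lambda)\phi_v^{2}$ against the quadratic-in-$\eta$ contributions, and $E^{(2)}$ collapses to a sum of one square per edge,
\[
E^{(2)}=\sum_{e}c_{e}f_{u_{e}}f_{v_{e}}\bigl(\theta_{e}+\eta_{v_{e}}-\eta_{u_{e}}\bigr)^{2}.
\]
Taking $\xi_e:=\theta_e+\eta_{v_e}-\eta_{u_e}$ together with the $\eta_v$ as new coordinates is a linear isomorphism, so the inertia is immediate: since $c_e>0$, edge $e$ contributes a negative square precisely when $f_{u_e}f_{v_e}<0$, i.e.\ when $e$ is a sign change (and no coefficient vanishes because $f$ has no zero entries). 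Hence $n_-\!\bigl(E^{(2)}\bigr)=\zeros_n$, and therefore the Morse index equals $\zeros_n-(n-1)=\sigma_n$.

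I expect the main obstacle to be the bookkeeping in the reduction: justifying that the Morse Hessian is genuinely the $\phi$-Schur complement of $E^{(2)}$ and applying Haynsworth additivity correctly despite the $\phi$-block being indefinite and $E^{(2)}$ carrying the $(|V|-1)$-dimensional gauge kernel (a consistency check is that the Schur complement then has $n_0=|V|-1$, matching the gauge directions exactly). The genuinely novel ingredient is the substitution $\phi_v=f_v\eta_v$ in the previous paragraph; it is also precisely where both hypotheses enter — simplicity to make the perturbative reduction valid, and $f_v\neq0$ to define $\eta_v=\phi_v/f_v$ — so it is the step I would verify first, testing it on a single cycle where $\beta=1$ and the conclusion must be $\sigma_n\in\{0,1\}$.
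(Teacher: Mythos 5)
Your proof is correct, and it takes a genuinely different route from the paper's. The paper never touches the magnetic Hessian directly: it cuts the $\beta$ edges of $S$ to get the tree family $\Gcut$, invokes Fiedler's nodal theorem on trees to identify $\lambda_n(\G)$ as $\lambda_{\zeros_n-p+1}(\Gcut)$ (Theorem~\ref{thm:cp}), computes the index of the $\gamma$-critical point to be $n-1+\beta-\zeros_n$ (Theorem~\ref{thm:inertia_cut_tree}), and then transfers this to the magnetic side by four applications of the Reduction Theorem~\ref{thm:reduction} through the bridge form $\widehat{h}(z)$ (Theorem~\ref{thm:inertia_mag}). You stay entirely on the magnetic side: your Haynsworth/Schur-complement step is the general form of the paper's Reduction Theorem (Remark~\ref{rem:crit_manifold} is the same Schur complement normalized to have vanishing cross terms), but instead of importing the eigenvalue position on a tree from Fiedler, you diagonalize the second variation outright via the ground-state substitution $\phi_v=f_v\eta_v$, where the eigenvalue equation yields $\phi^{T}(H_0-\lambda)\phi=\sum_{e}c_e f_{u_e}f_{v_e}(\eta_{u_e}-\eta_{v_e})^{2}$ and hence
\begin{equation*}
  E^{(2)}=\sum_{e=(u,v)}c_e f_u f_v\left(\theta_e+\eta_v-\eta_u\right)^{2},
\end{equation*}
so the negative squares are exactly the sign-change edges. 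I verified the algebra: the decoupling of real eigenfunction variations from the phases, the cross and $\theta_e^2$ coefficients, and the factor-of-two consistency with second-order perturbation theory all check out (your inertia triple $(|V|-n,\,n-1,\,0)$ is ordered oppositely to the paper's $(n_-,n_0,n_+)$ convention, but you use $n_-=n-1$ correctly, and the $\phi$-block is indeed invertible on $f^{\perp}$ by simplicity). What your route buys: no Fiedler theorem, no cut family, no interlacing; the full inertia $(\sigma_n,0,\beta-\sigma_n)$ in flux space, hence nondegeneracy of the critical point, which the paper does not state; and, since a $\beta$-dimensional Hessian has index in $[0,\beta]$, an independent re-derivation of the bounds \eqref{eq:nodal_bound_mu} for generic eigenpairs rather than their use as input. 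What the paper's route buys: the cut--magnetic duality of Section~\ref{sec:duality} and the link between tree eigenvalues and partition energies \cite{BerRazSmi_prep11}, which have independent interest and feed the conjectures of Section~\ref{sec:discuss}.
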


An immediate consequence of this theorem is the following.

\begin{corollary}
  The non-degenerate $n$-the eigenvalue of the discrete Schr\"odinger
  operator is stable with respect to magnetic perturbation of the
  operator if and only if the corresponding eigenfunction has exactly
  $n-1$ sign changes.  

  By ``non-degenerate'' we understand a simple eigenvalue whose
  eigenfunction does not vanish on vertices and by ``stability'' we
  mean that the eigenvalue has a local minimum at zero magnetic
  field.
\end{corollary}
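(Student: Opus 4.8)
The plan is to analyze $\lambda_n(\va)$ by analytic perturbation theory about $\va=0$. Fix a spanning tree of $\G$ and place the phase $\alpha_j$ on the $j$-th non-tree edge $e_j=(u_j,v_j)$, so that the Hermitian matrix $H(\va)$ coincides with $H(0)=Q-C$ except in the entries $-C_{u_jv_j}e^{\pm\rmi\alpha_j}$. Because $\lambda_n(0)$ is simple, $\lambda_n(\va)$ and a unit eigenvector $f(\va)$ are real-analytic near $0$, and $f:=f(0)$ can be taken real. Then $\partial_jH(0)=\rmi B_j$, where $B_j$ is the real antisymmetric matrix supported in the slots $(u_j,v_j)$ and $(v_j,u_j)$, so that $\partial_j\lambda_n(0)=\langle f,\partial_jH\,f\rangle=\rmi\,f^{\mathsf T}B_jf=0$. (The same fact also follows from the time-reversal symmetry $H(-\va)=\overline{H(\va)}$, which forces $\lambda_n(-\va)=\lambda_n(\va)$.) This establishes the first assertion.

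For the second assertion I would compute the Hessian. Second-order perturbation theory for the simple eigenvalue, together with $\partial_i\partial_jH(0)=\delta_{ij}D_j$ (with $D_j$ the real symmetric matrix carrying $C_{u_jv_j}$ on $e_j$) and the reality of $f$ and of the reduced resolvent $S=(H(0)-\lambda_n)^{-1}$ on $f^{\perp}$, yields the real symmetric matrix
\[ \mathcal M_{ij}=2\,\delta_{ij}\,C_{u_jv_j}f_{u_j}f_{v_j}+2\,(B_if)^{\mathsf T}S\,(B_jf). \]
Two structural facts should govern its inertia: the diagonal term is negative exactly on the sign-change edges (where $C_{u_jv_j}>0$ but $f_{u_j}f_{v_j}<0$), while the reduced resolvent $S$ has precisely $n-1$ negative eigenvalues, contributed by $\lambda_1,\dots,\lambda_{n-1}<\lambda_n$. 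The Sturm count $n-1$ and the number $\zeros_n$ of sign changes should enter through these two mechanisms and cancel down to $\sigma_n=\zeros_n-(n-1)$.

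To extract the exact negative index I would pass to the gauge-free picture, writing $\mathcal M$ as the restriction, to the $\beta$-dimensional cycle space, of the quadratic form $x\mapsto 2\sum_{e}C_ef_uf_v\,x_e^2+2\,v(x)^{\mathsf T}S\,v(x)$ on $\R^{|E|}$, where $v(x)=\sum_e x_eB_ef\in f^{\perp}$ and the $(|V|-1)$ gauge directions lie in the kernel. On $\R^{|E|}$ the diagonal part has exactly $\zeros_n$ negative directions; I would then diagonalize the coupling to $S$ by a Schur-complement congruence, showing that the $n-1$ negative eigenvalues of $S$ absorb exactly $n-1$ of these and that the restriction to the cycle space leaves a nondegenerate form of negative index $\sigma_n$. \emph{The hard part} is precisely this signature bookkeeping. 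A purely $\mathbb Z_2$-topological argument cannot suffice, since the sign pattern $\sign(f_uf_v)$ carries trivial flux around every cycle; one must therefore keep quantitative control of the resolvent coupling, prove the cancellation is exactly $n-1$ (neither more nor less), and establish nondegeneracy so that the Morse index is well defined. I would anchor the count at $n=1$, where the diamagnetic inequality gives a strict local minimum ($\sigma_1=0$), treat the top eigenvalue by applying the same analysis to $-H$, and bridge the general case by an interlacing/eigenvalue-counting comparison of $\G$ with the $\beta$ trees obtained by cutting the independent cycles, on which the Sturm equality $\zeros=n-1$ holds exactly.
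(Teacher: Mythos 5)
Your first half is fine: the vanishing of $\nabla\lambda_n$ at $\va=0$ via $\partial_j H(0)=\rmi B_j$ with $B_j$ real antisymmetric and $f$ real is exactly the paper's computation \eqref{eq:deriv1_mag}, and the time-reversal argument $\lambda_n(-\va)=\lambda_n(\va)$ is a clean alternative. But the second half --- which via Theorem~\ref{thm:main} is the entire content of the corollary (stability $\Leftrightarrow$ Morse index $0$ $\Leftrightarrow$ $\sigma_n=0$) --- is a program, not a proof, and you say so yourself. Two concrete problems. First, a sign slip: with your convention $S=(H(0)-\lambda_n)^{-1}$ on $f^{\perp}$, second-order perturbation theory (cf.\ \eqref{eq:pert_second_deriv}) gives $\mathcal M_{ij}=2\delta_{ij}f_{u_j}f_{v_j}-2(B_i f)^{\mathsf T}S(B_j f)$, so your bookkeeping heuristic about the $n-1$ negative eigenvalues of $S$ is built on the wrong sign. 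Second, and decisively: the claim that those $n-1$ eigenvalues ``absorb exactly $n-1$'' of the $\zeros_n$ negative diagonal directions does not follow from any Schur-complement or Haynsworth-type congruence alone. Inertia additivity tells you the inertia of an auxiliary block matrix, not how the two terms of $\mathcal M$ interact; the outcome depends on the position of the vectors $B_jf$ (equivalently your $v(x)$) relative to the spectral subspaces of $H(0)$, and controlling that is precisely where the nodal information must enter --- it is the theorem, not a lemma. Your anchors (diamagnetic inequality at $n=1$, $-H$ for the top eigenvalue) settle only the extreme cases, and even granting index $\sigma_n=0$ you still need the nondegeneracy you flag but do not supply before ``local minimum'' follows.

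For comparison, the paper never writes the Hessian of $\lambda_n(\va)$ at all. It proves the index statement variationally: eigenvalues are constrained critical values of the quadratic forms $\hcut$ and $\hmag$ on the unit sphere, whose inertias are read off from Theorems~\ref{thm:index_of_eigenvector} and \ref{thm_index_of_eigenvector_complex}; the Reduction Theorem~\ref{thm:reduction} transfers inertia along critical manifolds through the bridge form $\widehat{h}(z)$; and the only place nodal data enters is Fiedler's tree theorem, which identifies $\lambda$ as eigenvalue number $\zeros_n-p+1$ of the cut tree $\Gcut$ (Theorems~\ref{thm:cp}--\ref{thm:inertia_mag}). Your closing suggestion --- cut the $\beta$ independent cycles and compare by interlacing --- is exactly the paper's Section~\ref{sec:duality}, but it is carried out there only for $\beta=1$; for $\beta>1$ iterating rank-one interlacing loses track of the eigenvalue index (the shift $p(\gamma)$ depends on the sign of each $\hg_j$), which is why the paper explicitly abandons interlacing and ``goes straight to the quadratic form.'' If you want to salvage your Hessian formula, the realistic repair is to prove its inertia by that same reduction machinery --- at which point you will have reconstructed Theorems~\ref{thm:inertia_cut_tree} and \ref{thm:inertia_mag}.
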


Other possible consequences of our result and links to several other
questions are discussed in Section~\ref{sec:discuss}.  The rest of the
paper is structured as follows.  In Section~\ref{sec:mag_ham} we
provide detailed definitions.  Section~\ref{sec:duality} is devoted to
a duality between the magnetic perturbation and a certain perturbation
to the potential, coupled with removal of edges.  This leads to an
alternative proof of the result in the case $\beta=1$
(subsection~\ref{sec:duality_proof}) which, although unnecessary for
the general proof, provides us with some important insights.
Section~\ref{sec:tools} collects the tools necessary for the proof of
Theorem~\ref{thm:main}, while Section~\ref{sec:proof_main} contains
the proof itself.

%%%%%%%%%%%%%%%%%%%%%%%%%%%%%%%%%%%%%%%%%%%%%%%%%%
%%%%%%%%%%%%%%%%%%%%%%%%%%%%%%%%%%%%%%%%%%%%%%%%%%
\section{Magnetic Hamiltonian on discrete graphs}
\label{sec:mag_ham}

Let $\G = (V, E)$ be a simple finite graph with the vertex set $V$
and the edge set $E$.  We define the Schr\"odinger operator with the
potential $q : V\to\Reals$ by
\begin{equation}
  \label{eq:discr_schrod}
  H: \Reals^{|V|} \to \Reals^{|V|}, \qquad
  (H\psi)_u = - \sum_{v\sim u} \psi_v + q_u\psi_u,
\end{equation}
that is the matrix $H$ is
\begin{equation}
  \label{eq:matrix_H}
  H = Q - C,
\end{equation}
where $Q$ is the diagonal matrix of site potentials $q_u$ and $C$ is
the adjacency matrix of the graph.  It is perhaps more usual (and
physically motivated) to represent the Hamiltonian as $H = Q + L$,
where the Laplacian $L$ is given by $L = D - C$ with $D$ being the
diagonal matrix of vertex degrees.  But since we will not be imposing
any restrictions on the potential $Q$, we absorb the matrix $D$ into
$Q$.

The operator $H$ has $|V|$ eigenvalues, which we number in increasing
order,
\begin{displaymath}
  \lambda_1 \leq \lambda_2 \leq \ldots \leq \lambda_{|V|}.
\end{displaymath}

We define the magnetic Hamiltonian (magnetic Schr\"o\-dinger operator)
on discrete graphs as
\begin{equation}
  \label{eq:discr_mag_schrod}
  (H\psi)_u = - \sum_{v\sim u} e^{iA_{v,u}}\psi_v + q_u\psi_u,
\end{equation}
with the convention that $A_{v,u} = -A_{u,v}$, which makes $H$
self-adjoint.  For further details, the reader should consult
\cite{LieLos_dmj93,Sun_cm94,CdV_spectre,CdVToHTru_prep10}.

A sequence of directed edges $C = [u_1,u_2,\ldots,u_n]$ is called a
cycle if the terminus of edge $u_j$ coincides with the origin of the
edge $u_{j+1}$ for all $j$ ($u_{n+1}$ is understood as $u_1$).  The
flux through the cycle $C$ by
\begin{equation}
  \label{eq:flux_through_C}
  \Phi_C = A_{u_1,u_2} + \ldots + A_{u_{n-1},u_n} + A_{u_n,u_1}
  \mod 2\pi
\end{equation}
Two operators which have the same flux through every cycle $C$ are
unitarily equivalent (by a gauge transformation).  Therefore, the
effect of the magnetic field on the spectrum is fully determined by
$\beta$ fluxes through a chosen set of basis cycles of the cycle
space.  We denote them by $\alpha_1,\ldots, \alpha_\beta$ and consider
the $n$-th eigenvalue of the graph as a function of $\vec{\alpha}$.

More precisely, fix an arbitrary spanning tree of the graph and let
$S$ be the set of edges that do not belong to the chosen tree.
Obviously, $S$ contains exactly $\beta$ edges.

\begin{lemma}
  Any magnetic Schr\"odinger operator on the graph $\G$ is unitarily
  equivalent to one of the operators of the type
  \begin{equation}
    \label{eq:S_mag_schrod}
    H_{u,v} =
    \begin{cases}
      V_{u}, & u=v,\\
      -1, & (u,v) \in \Edges\setminus S,\\
      -e^{\pm i \alpha_s}, & (u,v) = s \in S,
    \end{cases}
  \end{equation}
  where the sign of the phase is plus if $u<v$ and minus if $u>v$.
\end{lemma}

\begin{example}
  Consider the triangle graph --- a graph with three vertices and tree
  edges connecting them.  One of the equivalent forms of the magnetic
  Hamiltonian for this graph is
  \begin{equation*}
    H(\Gmaga) =
    \begin{pmatrix}
      q_1 & -e^{i \alpha} & -1\\
      -e^{-i \alpha} & q_2 & -1\\
      -1 & -1 & q_3
    \end{pmatrix}.
  \end{equation*}
  The spectrum of $H(\Gmaga)$ as a function of
  $\alpha\in(-\pi,\pi]$ is shown in Fig.~\ref{fig:cut_mag_duality}.
\end{example}

\begin{figure}[t]
  \centering
  \includegraphics{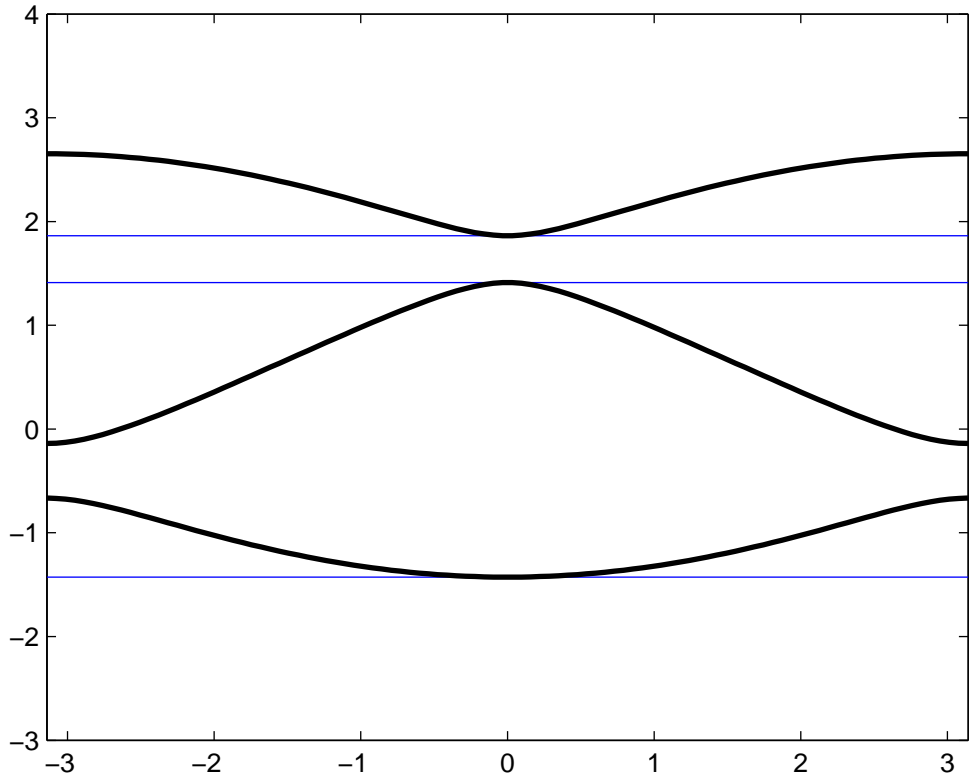}
  \caption{The eigenvalues of the triangle graph as functions of a
    magnetic phase $\alpha$ (bold lines) and the eigenvalues of the
    unperturbed graph (horizontal lines).} 
  \label{fig:mag_eig}
\end{figure}

%%%%%%%%%%%%%%%%%%%%%%%%%%%%%%%%%%%%%%%%%%%%%%%%%
%%%%%%%%%%%%%%%%%%%%%%%%%%%%%%%%%%%%%%%%%%%%%%%%%%
\section{A duality between a magnetic phase and a cut}
\label{sec:duality}

In this section we explore a simple result which shows a connection
between two types of perturbations of the operator $H$ that will be
used to prove the main theorem.  It illustrates the duality between
the perturbation of a discrete Schr\"odinger operator by a magnetic
phase on a cycle and the operation of removing (``cutting'') an edge
that lies on the cycle.  The latter operation was used to prove the
lower bound on the number of nodal domains in \cite{Ber_cmp08} and to
study partitions on discrete graphs in \cite{BerRazSmi_prep11}.

%%%%%%%%%%%%%%%%%%%%%%%%%%%%%%%%%%%%%%%%%%%%
\subsection{Tools used}
The result of this section (Theorem \ref{thm:cut_mag_interlacing}
below) is based on the following version of the Weyl's inequality of
linear algebra that can be obtained using the variational
characterization of the eigenvalues (see
\cite[Chap. 4]{HornJohnson_matrix} for similar results).

\begin{theorem}
  \label{thm:interlacing_LA}
  Let $A$ be a self-adjoint matrix and $B$ be a rank-one positive
  semidefinite self-adjoint matrix.  Then
  \begin{equation}
    \label{eq:interlacing_LA}
    \lambda_n(A-B) \leq \lambda_n(A) \leq \lambda_{n+1}(A-B),
  \end{equation}
  where $\lambda_n$ is the $n$-th eigenvalue, numbered in increasing
  order, of the corresponding matrix.  Moreover, the inequalities are
  strict if and only if $\lambda_n(A)$ is simple and its eigenvector
  is not in the null-space of $B$.

  Similarly, when $B$ is negative-definite, we have
  \begin{equation}
    \label{eq:interlacing_LA_neg}
    \lambda_{n-1}(A-B) \leq \lambda_n(A) \leq \lambda_{n}(A-B),
  \end{equation}
  with an analogous condition for strict inequalities.
\end{theorem}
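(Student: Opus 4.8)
The plan is to obtain both inequalities from the Courant--Fischer min-max principle and then to treat the strictness clause separately. Write $N$ for the size of the matrices and, since $B$ is rank-one positive semidefinite, fix a vector $v$ with $B=vv^{*}$, so that $\ker B=v^{\perp}$ has codimension one and $R_{A-B}=R_{A}$ on $\ker B$, where $R_{M}(x)=\langle Mx,x\rangle/\|x\|^{2}$. The left inequality $\lambda_{n}(A-B)\le\lambda_{n}(A)$ I would get from monotonicity alone: $B\ge 0$ gives $A-B\le A$, hence $\lambda_{n}(A-B)\le\lambda_{n}(A)$ for every $n$. For the right inequality $\lambda_{n}(A)\le\lambda_{n+1}(A-B)$ I would use the subspace form $\lambda_{n+1}(A-B)=\min_{\dim U=n+1}\max_{0\neq x\in U}R_{A-B}(x)$. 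For an optimal $(n+1)$-dimensional $U$, the intersection $U\cap\ker B$ has dimension at least $n$; picking an $n$-dimensional $U_{1}\subseteq U\cap\ker B$ and comparing against the min-max for $\lambda_{n}(A)$ gives
\[
\lambda_{n}(A)\le\max_{0\neq x\in U_{1}}R_{A}(x)=\max_{0\neq x\in U_{1}}R_{A-B}(x)\le\lambda_{n+1}(A-B).
\]
The negative(-semi)definite case I would reduce to this one: setting $A'=A-B$, the matrix $-B\ge 0$ is again rank-one, and applying the bounds just proved to $A'$ and the perturbation $-B$ yields $\lambda_{n}(A)\le\lambda_{n}(A-B)$ and $\lambda_{n-1}(A-B)\le\lambda_{n}(A)$.

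For the strictness, set $\mu=\lambda_{n}(A)$. The interlacing places $\mu$ in the closed interval $[\lambda_{n}(A-B),\lambda_{n+1}(A-B)]$, so \emph{both} inequalities are strict exactly when $\mu$ lies strictly inside, i.e. exactly when $\mu\notin\operatorname{spec}(A-B)$. It therefore suffices to show that $\mu\in\operatorname{spec}(A-B)$ precisely when the hypothesis of the theorem fails, namely when $\mu$ is a degenerate eigenvalue of $A$, or a simple one whose eigenvector lies in $\ker B$. One direction is easy to arrange: if $\mu$ has multiplicity $m\ge 2$, its $A$-eigenspace meets the codimension-one space $\ker B$ in dimension at least $m-1\ge 1$, and any nonzero $\phi$ in that intersection satisfies $(A-B)\phi=A\phi=\mu\phi$; and if $\mu$ is simple with eigenvector $\psi\in\ker B$, then $\psi$ is directly an eigenvector of $A-B$. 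In both cases $\mu\in\operatorname{spec}(A-B)$, so at least one inequality degenerates.

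The converse is the part I expect to require the most care, and I would isolate it as a short lemma: if $\mu$ is a simple eigenvalue of $A$ with eigenvector $\psi$ and also an eigenvalue of $A-B$, then necessarily $\psi\in\ker B$. To prove it, take $(A-B)\phi=\mu\phi$, which rearranges to $(A-\mu)\phi=\langle v,\phi\rangle\,v$. Pairing with $\psi$ and using that $A-\mu$ is self-adjoint with $(A-\mu)\psi=0$ gives $\langle v,\phi\rangle\,\langle\psi,v\rangle=\langle(A-\mu)\psi,\phi\rangle=0$. If $\psi\notin\ker B$, i.e. $\langle\psi,v\rangle\neq 0$, then $\langle v,\phi\rangle=0$, forcing $(A-\mu)\phi=0$; by simplicity $\phi$ is a scalar multiple of $\psi$, whence $\langle v,\phi\rangle$ is a nonzero multiple of $\langle v,\psi\rangle\neq 0$, a contradiction. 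Hence $\psi\in\ker B$, establishing the lemma and, with it, that strictness of both inequalities is equivalent to $\mu$ being simple with eigenvector outside $\ker B$. The main obstacle here is the bookkeeping that matches the failure of strictness exactly to these two alternatives; once the shared-eigenvector lemma is in place, everything lines up. The analogous strictness statement for the negative case I would read off from the same lemma applied to $A'=A-B$.
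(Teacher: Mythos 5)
Your proposal is correct and is essentially the argument the paper has in mind: the paper gives no proof of Theorem~\ref{thm:interlacing_LA}, presenting it as a version of Weyl's inequality ``obtained using the variational characterization of the eigenvalues'' (with a pointer to Horn--Johnson), and your derivation --- monotonicity for the left inequality, Courant--Fischer applied to an $n$-dimensional subspace of $U\cap\ker B$ for the right one, and the negative case by exchanging the roles of $A$ and $A-B$ --- is precisely that plan carried out. Your handling of the strictness clause, which the paper leaves entirely to the reference, is also correct: the reduction to whether $\lambda_n(A)\in\operatorname{spec}(A-B)$ is legitimate because all other eigenvalues of $A-B$ lie outside $[\lambda_n(A-B),\lambda_{n+1}(A-B)]$, the shared-eigenvector lemma via $(A-\mu)\phi=\langle v,\phi\rangle v$ paired against $\psi$ settles the converse, and since that lemma uses only $\operatorname{rank}B=1$ and not the sign of $B$, it yields the ``analogous condition'' in the negative-semidefinite case verbatim.
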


Another useful result is the first term in the perturbation expansion
of a parameter-dependent eigenvalue.  Let $A(x)$ be a Hermitian
matrix-valued analytic function of $x$.  Let $\lambda(x)$ be an
eigenvalue of the matrix $A$ that is simple in a neighborhood of a
point $x_0$.  We know from standard perturbation theory
\cite{Kato_book} that $\lambda(x)$ is an analytic function.  Denote by
$u(x)$ the normalized eigenvector of corresponding to the eigenvalue
$\lambda$ and by $v$ all other normalized eigenvectors (in a slight
abuse of notation).  Then we have the following formula for the
derivative of $\lambda$ evaluated at the point $x=x_0$.
\begin{equation}
  \label{eq:pert_first_deriv}
  \frac{\partial}{\partial x} \lambda = \left\langle u, \frac{\partial
      A}{\partial x} u \right\rangle.
\end{equation}

%%%%%%%%%%%%%%%%%%%%%%%%%%%%%%%%%%%%%%%%%%%%%%%%%
\subsection{Two operations on a graph}
\label{sec:two_operations}
Let $\lambda_n$ be a simple eigenvalue and the corresponding
eigenfunction $f$ be non-zero on vertices.  Let $(u_1,u_2)$ be an edge
that belongs to one of cycles of the graph.  We allow the graph to
have magnetic phases on some edges, but assume that there is no phase
on the edge $(u_1,u_2)$.  Then the operator $H=Q-C$ has the following
subblock corresponding to vertices $u_1$ and $u_2$,
\begin{equation}
  \label{eq:subblock_orig}
  H(\G)_{[u_1, u_2]} = \left(
    \begin{array}{cc}
      q_{u_1} & -1 \\
      -1 & q_{u_2}
    \end{array}
  \right).
\end{equation}

We consider two modifications of the original graph.  The first
modification of the graph is a cut: we remove the edge $(u_1,u_2)$ and
change the potential at sites $u_1$ and $u_2$.  Namely, we change the
$[u_1, u_2]$ subblock to
\begin{equation}
  \label{eq:subblock_cut}
  H(\Gcuta)_{[u_1, u_2]} = \left(
    \begin{array}{cc}
      q_{u_1}-\gamma & 0 \\
      0 & q_{u_2}-1/\gamma
    \end{array}
  \right),
\end{equation}
and leave the rest of the matrix $H$ intact.  We denote this
modification by $H(\Gcuta)$.  Note that this modification is a
rank-one perturbation of the original operator $H(\G)$.  Let $B$ be
the perturbation such that $H(\Gcuta) = H(\G) - B^c$.  Namely, the
matrix $B^c$ has the $[u_1,u_2]$ subblock
\begin{equation}
  \label{eq:subblock_cut_pert}
  B^c_{[u_1, u_2]} =
  \left(
    \begin{array}{cc}
      \gamma & -1 \\
      -1 & 1/\gamma
    \end{array}
  \right),
\end{equation}
and the rest of the elements are zero.  Then $B^c$ is positive-definite
if $\gamma>0$ and negative-definite if $\gamma<0$.  Note that the
cases $\gamma=\infty$ and $\gamma=0$ can also be given meaning of
removing (or imposing the Dirichlet condition at) the vertex $u_1$ or
the vertex $u_2$ correspondingly.  However, we will not dwell on this
issue and exclude these cases from our consideration.

Notably, if $f$ is an eigenfunction of $H(\G)$ and $\gamma =
f_{u_2} / f_{u_1} \in \Reals$, then $f$ is also an eigenfunction of
$H(\Gcuta)$.  Equivalently, $f$ is in the null-space of the
perturbation $B^c$.

The second modification of the original graph is the introduction of a
magnetic phase on the edge $(u_1,u_2)$.  The $[u_1, u_2]$ subblock of
the new operator $H(\Gmaga)$ is 
\begin{equation}
  \label{eq:subblock_mag}
  H(\Gmaga)_{[u_1, u_2]} = \left(
    \begin{array}{cc}
      q_{u_1} & -e^{i\alpha} \\
      -e^{-i\alpha} & q_{u_2}
    \end{array}
  \right),
\end{equation}
while other entries coincide with those of $H(\G)$.  Note that
$H(\Gmaga)$ is \emph{not} a rank-one perturbation of $H(\G)$.
However, it is a rank-one perturbation of the cut graph $H(\Gcuta)$
for any values of $\alpha$ and $\gamma$.  Namely, $H(\Gcuta) =
H(\Gmaga) - B^{mc}$, where
\begin{equation}
  \label{eq:subblock_cut_mag_pert}
  B^{mc}_{[u_1, u_2]} =
  \left(
    \begin{array}{cc}
      \gamma & -e^{i\alpha_j} \\
      -e^{-i\alpha_j} & 1/\gamma
    \end{array}
  \right),
\end{equation}
and all other entries of $B^{mc}$ are zero.  Also, the spectrum of
$H(\Gmaga)$ and $H(\G)$ coincide when $\alpha=0$ since the
operators coincide.

%%%%%%%%%%%%%%%%%%%%%%%%%%%%%%%%%%%%%%%%%%%%%%%%%
\subsection{A duality between the two operations}
\label{sec:duality_proof}

We now want to apply Theorem~\ref{thm:interlacing_LA} to the spectra
of $\G$, $\Gcuta$ and $\Gmaga$.  However, we must take care to
distinguish the two cases that correspond to equations
\eqref{eq:interlacing_LA} and \eqref{eq:interlacing_LA_neg}
($\gamma>0$ and $\gamma<0$ correspondingly).  

% We do it by the following shift in notation.
% \begin{definition}
%   \label{def:numbering}
%   The eigenvalues of $\G$, $\Gcuta$ and $\Gmaga$ will be
%   numbered in increasing order starting from 1 with the exception of
%   the eigenvalues of $\Gcuta$ with $\gamma<0$ which will be
%   numbered starting from the index 2.
% \end{definition}
% It will become clear that this numbering convention is actually logical.

\begin{definition}
  \label{def:numbering}
  The eigenvalues of $\G$, $\Gcuta$ and $\Gmaga$ will be
  numbered in increasing order starting from 1.  When we happen to
  index ``nonexistent'' eigenvalues we use the following convention:
  \begin{equation*}
    \lambda_j(\G) = 
    \begin{cases}
      -\infty, & j<1,  \\
      \infty, &j>n.
    \end{cases}
 \end{equation*}
\end{definition}

\begin{theorem}
  \label{thm:cut_mag_interlacing}
  Let $p(\gamma)$ be 1 if $\gamma<0$ and $0$ otherwise.  Then the following
  inequalities hold
  \begin{equation}
    \label{eq:cut_mag_interlacing}
    \lambda_{n-p(\gamma)}(\Gcuta) \leq \lambda_n(\Gmaga)
    \leq \lambda_{n-p(\gamma)+1}(\Gcuta),
  \end{equation}
  for all values of $\alpha$ and $\gamma$.  
  Furthermore, for any fixed $n$
  \begin{equation}
    \label{eq:extrema_1}
    \max_\gamma \lambda_{n-p(\gamma)}(\Gcuta) 
    = \min_\alpha \lambda_n(\Gmaga)
  \end{equation}
  and
  \begin{equation}
    \label{eq:extrema_2}
    \max_\alpha \lambda_n(\Gmaga)
    = \min_\gamma \lambda_{n-p(\gamma)+1}(\Gcuta). 
  \end{equation}
  Finally, if there are no magnetic phases on the graph $\G$ (i.e. all
  entries of $H(\G)$ are real), then one of the extrema
  \eqref{eq:extrema_1} or \eqref{eq:extrema_2} is equal to
  $\lambda_n(\G) = \lambda_n(\Gm^{\alpha=0})$, while the other is equal
  to $\lambda_n(\hat{\G}) := \lambda_n(\Gm^{\alpha=\pi})$.
\end{theorem}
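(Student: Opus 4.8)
The plan is to extract all three claims from Theorem~\ref{thm:interlacing_LA} and the first-order perturbation formula \eqref{eq:pert_first_deriv}, keeping the real work for the reverse inequalities in \eqref{eq:extrema_1}--\eqref{eq:extrema_2}. For the interlacing \eqref{eq:cut_mag_interlacing} I first note that the block \eqref{eq:subblock_cut_mag_pert} of $B^{mc}$ has determinant $\gamma\cdot(1/\gamma)-e^{i\alpha}e^{-i\alpha}=0$ and trace $\gamma+1/\gamma$, so $B^{mc}$ is rank one, positive semidefinite for $\gamma>0$ and negative semidefinite for $\gamma<0$. Applying Theorem~\ref{thm:interlacing_LA} to $H(\Gcuta)=H(\Gmaga)-B^{mc}$ with $A=H(\Gmaga)$, $B=B^{mc}$ then gives \eqref{eq:interlacing_LA} when $\gamma>0$ (so $p(\gamma)=0$) and \eqref{eq:interlacing_LA_neg} when $\gamma<0$ (so $p(\gamma)=1$); in both cases the result is exactly \eqref{eq:cut_mag_interlacing}.

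The easy halves of the extremal identities come for free. Because $\lambda_n(\Gmaga)$ depends only on $\alpha$ while $\lambda_k(\Gcuta)$ depends only on $\gamma$, the left inequality of \eqref{eq:cut_mag_interlacing}, holding for every pair $(\alpha,\gamma)$, yields $\lambda_{n-p(\gamma)}(\Gcuta)\le\min_\alpha\lambda_n(\Gmaga)$ for each $\gamma$ and hence $\max_\gamma\lambda_{n-p(\gamma)}(\Gcuta)\le\min_\alpha\lambda_n(\Gmaga)$; the right inequality gives $\max_\alpha\lambda_n(\Gmaga)\le\min_\gamma\lambda_{n-p(\gamma)+1}(\Gcuta)$.

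The crux is the reverse inequalities. Let $\alpha^*$ be a global minimiser of $g(\alpha):=\lambda_n(\Gmaga)$ on the circle, with normalised eigenvector $u$. Differentiating the block \eqref{eq:subblock_mag} and using \eqref{eq:pert_first_deriv} gives $g'(\alpha)=2\,\Im\!\big(e^{i\alpha}\bar u_{u_1}u_{u_2}\big)$, so criticality forces $\gamma^*:=e^{i\alpha^*}u_{u_2}/u_{u_1}\in\R$, which by \eqref{eq:subblock_cut_mag_pert} says exactly that $u$ lies in the null-space of $B^{mc}$ for this $\gamma^*$. Consequently $H(\Gamma^{\gamma^*}_{\mathrm{cut}})u=H(\Gmaga)u=g(\alpha^*)u$ at $\alpha=\alpha^*$, so $g(\alpha^*)=\lambda_m(\Gamma^{\gamma^*}_{\mathrm{cut}})$ with $m\in\{n-p(\gamma^*),\,n-p(\gamma^*)+1\}$. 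To decide which, I freeze $\gamma=\gamma^*$ and vary $\alpha$: the definiteness of $B^{mc}(\alpha,\gamma^*)$ depends only on $\sign(\gamma^*)$, so \eqref{eq:cut_mag_interlacing} traps $g(\alpha)$ in the \emph{fixed} interval $[\lambda_{n-p(\gamma^*)}(\Gamma^{\gamma^*}_{\mathrm{cut}}),\,\lambda_{n-p(\gamma^*)+1}(\Gamma^{\gamma^*}_{\mathrm{cut}})]$ for every $\alpha$. Were $g(\alpha^*)$ the upper endpoint, then $\min_\alpha g$ would equal an upper bound for $g$, forcing $g$ constant --- which fails whenever the edge lies on a cycle, since the flux then genuinely moves $\lambda_n$. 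Hence $g(\alpha^*)=\lambda_{n-p(\gamma^*)}(\Gamma^{\gamma^*}_{\mathrm{cut}})$, giving $\min_\alpha g\le\max_\gamma\lambda_{n-p(\gamma)}(\Gcuta)$ and, with the easy half, \eqref{eq:extrema_1}. The maximiser is symmetric, the same squeeze excluding the lower endpoint and producing \eqref{eq:extrema_2}. I expect this lower-versus-upper dichotomy to be the main obstacle; the squeeze is what settles it without any second-order computation.

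For the final statement I use the symmetry $H(\Gamma^{-\alpha}_{\mathrm{mag}})=\overline{H(\Gmaga)}$, which gives $g(\alpha)=g(-\alpha)$. Thus $\alpha=0$ and $\alpha=\pi$ are critical, and they are the only phases at which $H(\Gmaga)$ is real, namely $H(\G)$ and $H(\hat\G)$. Monotonicity of the single-cycle eigenvalue on $[0,\pi]$ --- equivalently, nonvanishing of $g'$ on the open interval --- then places the global extrema at these endpoints, so $\{\min_\alpha g,\ \max_\alpha g\}=\{\lambda_n(\G),\lambda_n(\hat\G)\}$. Which endpoint is the minimum is deliberately left open here; it is precisely the question the nodal surplus will answer later.
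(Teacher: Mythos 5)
Your interlacing step and the ``easy halves'' of \eqref{eq:extrema_1}--\eqref{eq:extrema_2} match the paper exactly, and your idea of differentiating $\lambda_n(\Gmaga)$ in $\alpha$ rather than the cut eigenvalue in $\gamma$ is even noted in the paper as an equivalent variant (compare \eqref{eq:alpha_deriv_zero}). But the crux of your argument has a genuine gap: the dismissal of the constant case. You exclude the possibility $g(\alpha^*)=\lambda_{n-p+1}(\Gc_{\gamma^*})$ by claiming that $g\equiv\mathrm{const}$ ``fails whenever the edge lies on a cycle, since the flux then genuinely moves $\lambda_n$.'' The theorem is stated for \emph{all} $n$, with no simplicity or nonvanishing hypotheses, and the sorted eigenvalue $\lambda_n(\Gmaga)$ can perfectly well be constant in $\alpha$ (e.g.\ flat bands, eigenfunctions vanishing at $u_1$ or $u_2$, or the two cut-spectrum bands touching, $\max_\gamma\lambda_{n-p}=\min_\gamma\lambda_{n-p+1}$ --- in which case the trapping \eqref{eq:cut_mag_interlacing} \emph{forces} constancy). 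The paper handles this head-on: it splits off the touching case, where $g$ is constant and both extremal identities hold trivially, and only in the gapped case $\max\lambda_{n-p}<\min\lambda_{n-p+1}$ runs the differentiation argument. A related soft spot: you apply \eqref{eq:pert_first_deriv} to the sorted function $g(\alpha)=\lambda_n(\Gmaga)$ at its global minimiser, but that formula requires a simple eigenvalue, and at a degenerate minimum (the conical ``Zeeman'' crossings mentioned in the paper's discussion) $g$ need not be differentiable and there is no distinguished eigenvector $u$. The paper avoids this by working with \emph{analytic branches} of the one-parameter cut family and using the spectral gap to guarantee that the branch attaining $\max_\gamma\lambda_{n-p}(\Gcuta)$ has a genuine interior maximum, hence a critical point; your mirrored argument lacks that safeguard.

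The final statement also does not go through as written. From the symmetry $g(\alpha)=g(-\alpha)$ you correctly get criticality at $\alpha=0,\pi$, but you then invoke ``monotonicity of the single-cycle eigenvalue on $[0,\pi]$'' --- nonvanishing of $g'$ on the open interval --- without proof. That claim is false in general (again, $g$ can be constant, and interior critical points are not excluded; the theorem only asserts that the extreme \emph{values} are $\lambda_n(\G)$ and $\lambda_n(\hat\G)$), and it is essentially the conclusion you are trying to prove. The paper instead extracts this from the duality itself: in the gapped case the critical condition yields $e^{i\ha}=\hg f_{u_1}/f_{u_2}$ with $|\hg f_{u_1}/f_{u_2}|=1$, and when $H(\G)$ is real the eigenfunctions of the cut family are real, forcing $e^{i\ha}=\pm 1$, i.e.\ $\ha\in\{0,\pi\}$; equivalently, one verifies directly that the real eigenfunctions of $\Gm^{\alpha=0}$ and $\Gm^{\alpha=\pi}$ produce touching points with the cut family at $\hg=f_{u_2}/f_{u_1}$. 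To repair your write-up you would need (i) the case split on band touching, with the constant case checked rather than excluded, (ii) an analytic-branch formulation of the criticality argument, and (iii) the reality argument pinning the extremising phase to $\{0,\pi\}$ in place of the monotonicity assertion.
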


\begin{remark}
  Note that at this point we don't know which extremum,
  \eqref{eq:extrema_1} or \eqref{eq:extrema_2}, is equal to
  $\lambda_n(\G)$.  This information is related to the nodal surplus.
  We have also defined yet another modification of the graph $\G$, the
  graph $\hat{\G}$ whose adjacency matrix has $-1$ in place of $1$ for
  the entries $C_{u_1,u_2}$ and $C_{u_1,u_2}$.
\end{remark}

\begin{remark}
  Let $\overline{\R} = \R \cup \{-\infty, \infty\}$ be the extended
  real line and $\widehat{\R} = \overline{\R} / [-\infty = \infty]$ be
  its projective (``wrapped'') version.  The eigenvalue
  $\lambda_{n-p(\gamma)}(\Gc_\gamma)$ is then a continuous function
  of $\gamma$, considered as a function from $\widehat{\R}$ to
  $\overline{\R}$.  See Figure~\ref{fig:cut_mag_duality} for an
  example.  Note that according to our definitions,
  $\lambda_{n-p(\gamma)}(\Gc_\gamma) = -\infty$ for $n=1$ and
  $\gamma<0$.
\end{remark}

\begin{figure}[t]
 \centering
 \includegraphics{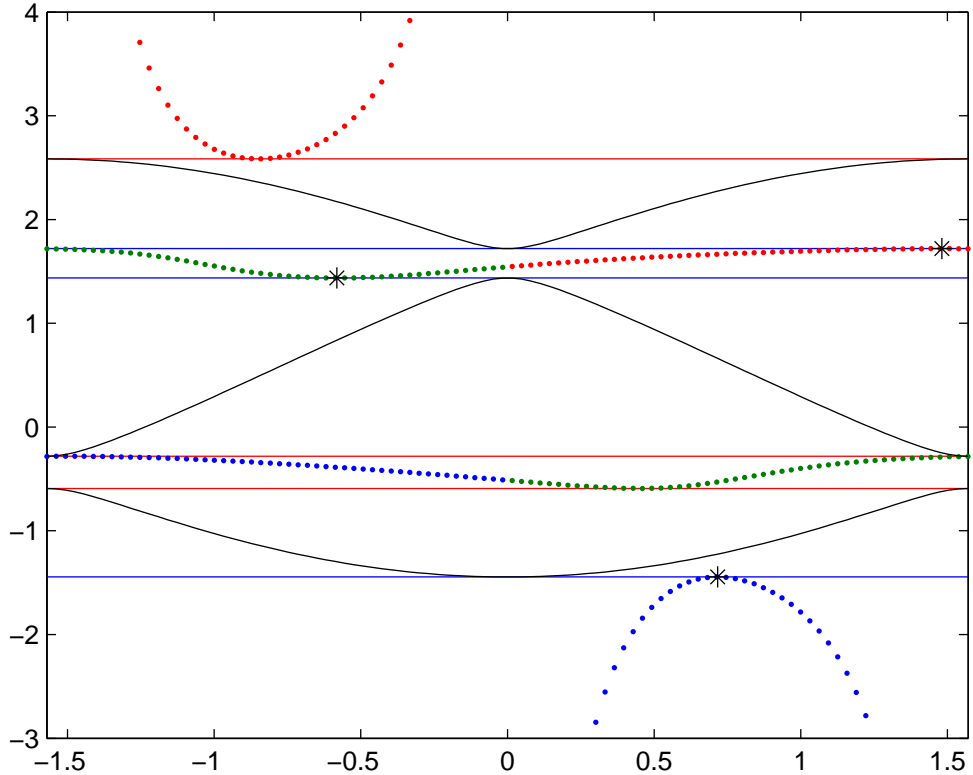}
 \caption{The duality between a magnetic field on one side and cut
   edge with added potential on the other.  The graph is a triangle.
   The black curves correspond to the eigenvalues as functions of the
   magnetic phase.  Colored symbols correspond to varying potential
   after cutting the edge.  The $x$-axis ranges from $-\pi/2$ to $\pi/2$
   with the magnetic phase taken as $\alpha = 2x$ and the potential
   parameter $\gamma=\tan(x)$.  The horizontal blue lines are
   eigenvalues of the true graph, while horizontal red lines are
   eigenvalues with magnetic phase $\pi$.}
 \label{fig:cut_mag_duality}
\end{figure}

\begin{proof}[Proof of Theorem~\ref{thm:cut_mag_interlacing}]
  The inequalities follow directly from
  Theorem~\ref{thm:interlacing_LA}, since for any $\alpha$ the graph
  $\Gmaga$ is a rank-one perturbation of $\Gcuta$.  Whether it
  is positive- or negative-definite depends on the sign of $\gamma$,
  and results in the shift by $p$.

 The properties of the extrema we can get as follows.  First of all,
  observe that if $\max \lambda_{n-p}(\Gcuta) = \min
  \lambda_{n-p+1}(\Gcuta)$, then $\lambda_n(\Gm^\alpha)$ is
  constant and equal to the common value of $\lambda_{n-p}(\Gcuta)$ and
  $\lambda_{n-p+1}(\Gcuta)$.

  Let now $\max \lambda_{n-p}(\Gcuta) < \min
  \lambda_{n-p+1}(\Gcuta)$.  The eigenvalues of a one-parameter
  family can always be represented as a set of analytic functions
  (that can intersect).  Let $\lambda'(\Gcuta)$ be the analytic
  function that achieves the maximum $\max \lambda_{n-p}(\Gcuta)$
  and $f$ be the corresponding eigenfunction.  We will differentiate
  $\lambda'(\Gcuta)$ using equation~\eqref{eq:pert_first_deriv}.
  At the maximum point $\gamma = \hg$ we have (see
  equation~\eqref{eq:subblock_cut}),
  \begin{equation}
    \label{eq:deriv_at_max}
    0 = \frac{d \lambda'}{d\gamma} 
    = \left\langle f, \frac{d B^c}{d\gamma} f\right\rangle
    = -|f_{u_1}|^2 + |f_{u_2}|^2 / \hg^2.
  \end{equation}
  From here it follows that
  \begin{equation}
    \label{eq:gamma}
    \hg = \pm \frac{|f_{u_2}|}{|f_{u_1}|} \qquad \mbox{or, equivalently,} \qquad
    \left| \hg f_{u_1}/f_{u_2} \right| = 1.
  \end{equation}
  Let $\ha$ be the solution of $e^{i\alpha} = \hg
  f_{u_1}/f_{u_2}$.  Direct calculation shows that the eigenfunction
  $f$ is in the null-space of the perturbation $B^{mc}$ of
  \eqref{eq:subblock_cut_mag_pert} with $\alpha=\ha$ and
  therefore $f$ is both in the spectrum of $\Gc_{\hg}$ and in the
  spectrum of $\Gm^{\ha}$, implying equation~\eqref{eq:extrema_1}
  follows.  Proof of equation~\eqref{eq:extrema_2} is completely
  analogous.

  Note that we could instead differentiate the eigenvalue of
  $\Gmaga$, leading to the condition
  \begin{equation}
    \label{eq:alpha_deriv_zero}
    f_{u_2} \overline{f_{u_1}} e^{i\ha} \in \Reals,
  \end{equation}
  instead of equation~\eqref{eq:deriv_at_max}.  One then sets
  $\hg = e^{i\ha}f_{u_2}/f_{u_1} \in \Reals$ to the same effect.

  Finally, when the matrix $H(\G)$ is real, the eigenfunctions of
  $\Gcuta$, $\Gm^{\alpha=0} $ and $\Gm^{\alpha=\pi}$ are real-valued.
  When $\alpha=0$ we can verify directly that the eigenfunction $f$ of
  $\Gc_{\alpha=0}$ is also an eigenfunction of $\Gc_{\hg}$ by setting
  $\hg = f_{u_2}/f_{u_1}$.  When $\alpha=\pi$, we also set $\gamma =
  f_{u_2}/f_{u_1}$ and do the same.
\end{proof}

Theorem~\ref{thm:cut_mag_interlacing} highlight a sort of duality
between the two modifications of the graph $\G$.  The spectra of the
graphs with a magnetic phase form bands (as the phase is varied) while
the spectra of the graphs with the cut fill the gaps between this
bands.  Minimums of one correspond to maximums of the other and in
half of the cases correspond to eigenvalues of the original graph.

We now explain how the $\beta=1$ case of Theorem~\ref{thm:main}
follows from Theorem~\ref{thm:cut_mag_interlacing}.  While for general
$\beta$ the proof is significantly different (it bypasses the interlacing
inequalities and goes straight to the quadratic form), some key
features are the same as in this simple case.

Starting with the eigenvalue $\lambda_n$ of $\G$ and the corresponding
eigenfunction $f$, we cut an edge on the only cycle of $\G$ to obtain
a family of trees $\Gcuta$.  For $\gamma = \hg = f_{u_2} /
f_{u_1}$, we have either
\begin{equation*}
  \max_\gamma \lambda_{n-p(\gamma)}(\Gcuta) 
  = \lambda_{n-p(\hg)}(\Gc_\hg) = \lambda_n(\G) 
  = \min_\alpha \lambda_n(\Gmaga),
\end{equation*}
or 
\begin{equation*}
  \max_\alpha \lambda_n(\Gmaga)
  = \lambda_n(\G)
  = \lambda_{n-p(\hg)+1}(\Gc_\hg)
  = \min_\gamma \lambda_{n-p(\gamma)+1}(\Gcuta).
\end{equation*}
In the first case, according to Fiedler theorem (equation
\eqref{eq:nodal_bound_mu} with $\beta=0$), the function $f$ has
$n-p(\hg)-1$ sign changes \emph{with respect to} the tree $\Gcuta$.
Adding back the removed edge $(u_1,u_2)$ adds another sign change if $\hg <
0$ and doesn't change the number of sign changes otherwise.  In other words,
it adds $p(\hg)$ sign changes.  Thus, with respect to $\G$, the function $f$
has $n-1$ sign change and $\sigma_n=0$.
In the second case, we similarly conclude that $f$ has $n-p(\hg)$
sign changes with respect to $\Gcuta$ and $n$ sign changes with respect to $\G$.
The nodal surplus is $\sigma_n=1$.  

On the other hand, in the first case $\lambda_n(\G)$ is a minimum of
$\lambda_n(\Gmaga)$ (Morse index 0), while in the second it is a
maximum of $\lambda_n(\Gmaga)$ (Morse index 1), which shows that
the Morse index coincides with $\sigma_n$ in the case $\beta=1$.

%%%%%%%%%%%%%%%%%%%%%%%%%%%%%%%%%%%%%%%%%%%%%%%%%
%%%%%%%%%%%%%%%%%%%%%%%%%%%%%%%%%%%%%%%%%%%%%%%%%
\section{Tools of the main proof}
\label{sec:tools}

In this section we collect some basic facts that will be repeatedly
used in the proof of Theorem~\ref{thm:main}.

\subsection{Critical points of the quadratic form}

\begin{definition}
  Let $F: \R^d \to \R$ be a twice differentiable function.  If $c$ is
  a critical point (i.e. $\nabla f(c) = 0$), the \emph{inertia} of $c$
  is the triple $(n_-, n_0, n_+)$ that counts the number of negative,
  zero and positive eigenvalues correspondingly of the Hessian (the
  matrix of second derivatives) at the point $c$.  The number $n_-$ is
  called the \emph{Morse index} (or simply \emph{index}).
\end{definition}

The next theorem is a reminder that the eigenvectors of a self-adjoint
matrix are critical points of the quadratic form on the unit sphere.

\begin{theorem}
  \label{thm:index_of_eigenvector}
  Let $A$ be an $d\times d$ real symmetric matrix and $h(x) =
  \langle x, Ax \rangle$, $x\in\R^d$, be the associated quadratic form.
  Then the (real) eigenvectors of the matrix $A$ are critical points
  of the function $h(x)$ on the unit sphere $\|x\|=1$.  

  Let $\lambda_n$ be the $n$-th eigenvalue of $A$ and $f^{(n)}$ be the
  corresponding normalized eigenfunction.  Define
  \begin{equation}
    \label{eq:n_pm}
    n_{-} = \#\{\lambda_m < \lambda_n\},\qquad
    n_{0} = \#\{\lambda_m = \lambda_n,\ m\neq n\},\qquad
    n_{+} = \#\{\lambda_m > \lambda_n\},
  \end{equation}
  with $n_{-}+n_0+n_+ = d-1$.  Then the inertia of the critical
  point $x = f^{(n)}$ is $(n_-, n_0, n_+)$.  In particular, if
  $\lambda_n$ is a simple eigenvalue, the inertia is $(n-1, 0, d-n)$.
\end{theorem}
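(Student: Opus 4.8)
The plan is to treat this as a constrained critical-point problem: first locate the critical points of $h$ on the sphere via Lagrange multipliers, then compute the \emph{intrinsic} Hessian of $h$ restricted to the sphere at such a point, and finally diagonalize that Hessian in the eigenbasis of $A$ to read off its inertia.

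First I would identify the critical points. Writing the constraint as $g(x) = \|x\|^2 - 1 = 0$, the method of Lagrange multipliers says that $x$ is critical for $h$ on $\{g=0\}$ precisely when $\nabla h(x) = \mu\,\nabla g(x)$ for some scalar $\mu$. Since $\nabla h(x) = 2Ax$ and $\nabla g(x) = 2x$, this reads $Ax = \mu x$, so the critical points on the unit sphere are exactly the normalized eigenvectors of $A$, with $\mu$ the corresponding eigenvalue. This establishes the first assertion of the theorem.

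Next I would compute the Hessian of $h$ on the sphere, as a quadratic form on the tangent space $T = \big(f^{(n)}\big)^{\perp}$ at the critical point $f = f^{(n)}$, whose eigenvalue (and Lagrange multiplier) is $\lambda_n$. I would do this directly: pick any smooth curve $x(t)$ on the sphere with $x(0) = f$ and $\dot x(0) = v \in T$, and differentiate $h(x(t)) = \langle x(t), A x(t)\rangle$ twice, using the symmetry of $A$. Differentiating the constraint $\langle x(t), x(t)\rangle \equiv 1$ twice gives $\langle \ddot x(0), f\rangle = -\|v\|^2$; combining this with $A f = \lambda_n f$ yields
\[
  \left.\frac{d^2}{dt^2}\right|_{t=0} h(x(t))
  = 2\langle v, A v\rangle - 2\lambda_n \|v\|^2
  = 2\big\langle v, (A - \lambda_n I) v\big\rangle .
\]
Hence the Hessian of $h$ on the sphere is, up to the irrelevant positive factor $2$, the quadratic form $v \mapsto \langle v, (A - \lambda_n I) v\rangle$ on $T$. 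To finish, I would take an orthonormal eigenbasis $f^{(1)}, \ldots, f^{(d)}$ of $A$; then $T$ is spanned by $\{f^{(m)} : m \neq n\}$, and $(A - \lambda_n I) f^{(m)} = (\lambda_m - \lambda_n) f^{(m)}$, so the restricted Hessian is already diagonal with entries $\lambda_m - \lambda_n$, $m \neq n$. Counting signs gives exactly the numbers $n_-$, $n_0$, $n_+$ of \eqref{eq:n_pm}, and for a simple $\lambda_n$ these reduce to $(n-1, 0, d-n)$.

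The one step needing genuine care — the main, if modest, obstacle — is the Hessian computation on the curved constraint surface: the correction term $-2\lambda_n\|v\|^2$ arises precisely from $\langle \ddot x(0), f\rangle = -\|v\|^2$, and dropping it would erroneously produce the form $\langle v, A v\rangle$ with the wrong inertia. Equivalently, one may justify the same correction by forming the Lagrangian $h(x) - \lambda_n(\|x\|^2 - 1)$, whose unconstrained Hessian $2(A - \lambda_n I)$ restricted to $T$ is exactly the intrinsic Hessian; either route makes the curvature correction transparent and reduces the inertia count to a diagonalization.
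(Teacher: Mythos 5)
Your proposal is correct, but it reaches the constrained Hessian by a genuinely different route than the paper. The paper's proof makes an orthogonal change of coordinates to the eigenbasis of $A$ (invoking Sylvester's law of inertia to justify the change of variables), then eliminates the constraint by substituting $a_n^2 = 1 - \sum_{j\neq n} a_j^2$, so that on the sphere near $f^{(n)}$ the form is \emph{exactly} $\lambda_n + \sum_{j\neq n}(\lambda_j-\lambda_n)a_j^2$ in the chart coordinates $a_j$, $j\neq n$: the Hessian is read off by inspection, with no differentiation and no curvature bookkeeping, and the criticality of $f^{(n)}$ comes for free from this exact local normal form. You instead identify \emph{all} critical points via Lagrange multipliers (which the paper leaves implicit) and compute the intrinsic Hessian invariantly, differentiating $h$ twice along curves and using the second-order consequence of the constraint, $\langle \ddot x(0), f\rangle = -\|v\|^2$, to produce the correction term $-2\lambda_n\|v\|^2$; diagonalizing $v \mapsto \langle v,(A-\lambda_n I)v\rangle$ on the tangent space $\bigl(f^{(n)}\bigr)^{\perp}$ then yields the same diagonal entries $\lambda_m - \lambda_n$, $m\neq n$. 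What each approach buys: the paper's substitution is shorter and gives an exact, not merely second-order, normal form in explicit coordinates; yours is coordinate-free, proves the first assertion of the theorem explicitly, and generalizes verbatim to other constraint manifolds, and you correctly isolate the one dangerous step (omitting the curvature correction would give the form $\langle v, Av\rangle$ with the wrong inertia). One cosmetic point: when $\lambda_n$ is degenerate, you should choose the orthonormal eigenbasis so that its $n$-th member is $f^{(n)}$ itself, ensuring the tangent space is spanned by the remaining $f^{(m)}$; this is always possible and costs nothing.
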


\begin{remark}
  \label{rem:value_at_crit}
  The value of the quadratic form $h$ at the critical point $f^{(n)}$
  is $\lambda_n$.
\end{remark}

\begin{proof}
  The idea is intuitively clear: $n_{-}$ --- which is the Morse index
  --- counts the number of directions in which the quadratic form
  decreases relative to the value at $x=f^{(n)}$.  These directions
  are the eigenvectors corresponding to the eigenvalues that are less
  than $\lambda_n$.  Similar characterizations are valid for $n_0$ and
  $n_+$. 

  We note that by Sylvester law of inertia, the inertia is invariant
  under the change of variables.  Making the orthogonal change of
  coordinates to the eigen-basis of the matrix $A$, the quadratic form
  $h(a)$ becomes
  \begin{equation*}
    h(a) = \lambda_1 a_1^2 + \lambda_2 a_2^2 + \ldots + \lambda_d a_d^2,
  \end{equation*}
  while the sphere is given by the equations
  \begin{equation*}
    a_1^2 + a_2^2 + \ldots + a_d^2 = 1.
  \end{equation*}
  Thus, on the sphere, the quadratic
  form in terms of variables $a_1,\ldots, a_{n-1}, a_{n+1}, \ldots,
  a_d$ is given by
  \begin{equation*}
    h = \sum_{j\neq n} (\lambda_j-\lambda_n) a_j^2,
  \end{equation*}
  and the Hessian is a diagonal matrix with $\lambda_j-\lambda_n$,
  $j=1,\ldots, d$, $j\neq n$.  The statement of the Theorem follows
  immediately.
\end{proof}

In the more general case when the matrix is Hermitian we should
consider the quadratic form on the space $\C^d$.  However, we can also
consider it on the real space of double dimension.

\begin{theorem}
  \label{thm_index_of_eigenvector_complex}
  Let $A$ be an $d\times d$ Hermitian matrix and $h(z) =
  \langle z, Az \rangle$, $z\in\C^d$, be the associated quadratic
  form.  Consider $h$ as a function of $2d$ real variables $(x,y)$,
  where $x\in \R^d$, $y\in\R^d$ and $z=x+iy$.
  Then the eigenvectors of the matrix $A$ are critical points
  of the function $h(x+iy)$ on the unit sphere $\|x+iy\|=1$.  

  Let $\lambda_n$ be the $n$-th eigenvalue of $A$ and $f^{(n)}$ be a
  corresponding eigenfunction.  Then the inertia of the critical
  point $x+iy = f^{(n)}$ with respect to the real space $\R^{2d}$ is
  $(2n_-, 2n_0+1, 2n_+)$, where $n_-$, $n_0$ and $n_+$ are defined by
  \eqref{eq:n_pm}.  In particular, if $\lambda_n$ is a simple
  eigenvalue, the inertia is $(2n-2, 1, 2d-2n)$.
\end{theorem}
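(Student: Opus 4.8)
The plan is to realify the problem and reduce it to the already-proven real statement, Theorem~\ref{thm:index_of_eigenvector}. Writing $A = P + \rmi R$ with $P = \Re A$ and $R = \Im A$, the hypothesis $A^* = A$ forces $P$ to be real symmetric and $R$ real antisymmetric. Setting $z = x + \rmi y$ and expanding $h(z) = \langle z, Az\rangle = (x - \rmi y)^T (P + \rmi R)(x + \rmi y)$, I would check that the imaginary part cancels (using $P = P^T$ and $R = -R^T$) and that the real part is
\begin{equation*}
  h(x,y) = x^T P x + y^T P y - 2\, x^T R y = \langle w, \tilde A w\rangle, \qquad w = \begin{pmatrix} x \\ y\end{pmatrix} \in \R^{2d},
\end{equation*}
where
\begin{equation*}
  \tilde A = \begin{pmatrix} P & -R \\ R & P\end{pmatrix}
\end{equation*}
is a genuine $2d\times 2d$ real symmetric matrix. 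Thus $h$, viewed as a function of the $2d$ real variables $(x,y)$, is exactly the real quadratic form associated with $\tilde A$, and the unit sphere $\|x+\rmi y\| = 1$ in $\C^d$ is the unit sphere $\|w\| = 1$ in $\R^{2d}$.

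Next I would pin down the spectrum of $\tilde A$ in terms of that of $A$. If $Av = \lambda v$ with $v = a + \rmi b$ and $a,b\in\R^d$, then splitting into real and imaginary parts gives $Pa - Rb = \lambda a$ and $Ra + Pb = \lambda b$; a direct substitution then shows that both $(a, b)^T$ and $(-b, a)^T$ are eigenvectors of $\tilde A$ with the same eigenvalue $\lambda$. These two real vectors are orthogonal and of equal length, and the second is simply the realification of $\rmi v$. In fact $\tilde A$ is precisely the realification of the complex-linear operator $A$, and since $A$ is Hermitian (all eigenvalues real) the upshot is that the spectrum of $\tilde A$ equals that of $A$ with every multiplicity doubled: if $A$ has eigenvalues $\lambda_1 \le \cdots \le \lambda_d$, then $\tilde A$ has eigenvalues $\lambda_1, \lambda_1, \lambda_2, \lambda_2, \ldots, \lambda_d, \lambda_d$. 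Exhibiting $2d$ independent eigenvectors this way exhausts $\R^{2d}$, so there are no others. In particular the realification $(a,b)^T$ of $f^{(n)} = a + \rmi b$ is an eigenvector of $\tilde A$ for $\lambda_n$, hence a critical point of $h$ on the sphere, which also proves the first assertion of the theorem.

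Finally I would apply the real Theorem~\ref{thm:index_of_eigenvector} to $\tilde A$ at the critical point $(a,b)^T$ and read off the inertia from the doubled spectrum. Every eigenvalue of $A$ strictly below $\lambda_n$ contributes two eigenvalues of $\tilde A$ below $\lambda_n$, giving $2n_-$; likewise $2n_+$ lie strictly above. The eigenvalue $\lambda_n$ itself occurs in $A$ with multiplicity $n_0 + 1$, hence in $\tilde A$ with multiplicity $2(n_0+1)$; discarding the one direction along the chosen eigenvector $(a,b)^T$ leaves $2(n_0 + 1) - 1 = 2n_0 + 1$ zero modes of the Hessian. This yields the inertia $(2n_-,\, 2n_0 + 1,\, 2n_+)$, whose entries sum to $2(d-1) + 1 = 2d - 1$ as required on a sphere in $\R^{2d}$; the simple case follows by setting $n_0 = 0$, $n_- = n - 1$, $n_+ = d - n$.

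The one point needing care — and the only place the complex and real counts genuinely differ — is the origin of the extra zero eigenvalue responsible for the ``$+1$'' in $2n_0 + 1$. It comes from the partner vector $(-b, a)^T$, the realification of $\rmi f^{(n)}$, which lies in the same $\lambda_n$-eigenspace of $\tilde A$ and is tangent to the phase orbit $\{e^{\rmi\theta} f^{(n)}\}$ along which $h$ is constant. It is therefore automatically counted in $\tilde n_0$ by Theorem~\ref{thm:index_of_eigenvector}, provided one is careful to treat $f^{(n)}$ and $\rmi f^{(n)}$ as distinct real directions rather than identifying them; no separate quotient by the phase symmetry is needed.
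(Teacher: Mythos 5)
Your proposal is correct and takes essentially the same route as the paper: you realify the quadratic form to the real symmetric matrix $\begin{pmatrix} \Re A & -\Im A\\ \Im A & \Re A\end{pmatrix}$ (the paper's matrix $B$ in \eqref{eq:quad_form_in_double}), show its spectrum is that of $A$ with all multiplicities doubled via the paired eigenvectors $(\Re z, \Im z)$ and $(-\Im z, \Re z)$, and then invoke Theorem~\ref{thm:index_of_eigenvector}. Your explicit expansion of $h(x,y)$ and your tracing of the extra zero mode to the phase direction $\rmi f^{(n)}$ simply spell out details the paper leaves as an easy exercise.
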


\begin{proof}
  We adopt the convention that in the scalar product $\langle z, w
  \rangle$ the conjugation is applied to the first argument.  Also,
  when mapping a $z\in\C^d$ to $\R^{2d}$, we first list the real parts
  of the components of $z$ and then the imaginary parts.  With this
  conventions, it's an easy exercise to show that the quadratic form
  $h(z)$ in variables $(x,y)$ corresponds to the matrix
  \begin{equation}
    \label{eq:quad_form_in_double}
    B =
    \begin{pmatrix}
      \Re A & -\Im A\\
      \Im A & \Re A
    \end{pmatrix}.
  \end{equation}
  Note that because $A$ is Hermitian, the matrix $\Im A$ is
  scew-symmetric and, therefore, the matrix $B$ is real symmetric.
  Every eigenvector $z$ of the matrix $A$ corresponds to 2 real
  eigenvectors of $B$ with the same eigenvalue, namely $(\Re z, \Im
  z)$ and $(\Re(iz), \Im(iz)) = (-\Im z, \Re z)$.  These eigenvectors
  are orthogonal to each other, and to other similarly obtained
  eigenvectors.  We therefore conclude that the spectrum of $B$ is the
  same as the spectrum of $A$ with all multiplicities doubled.  The
  statement now follows from Theorem~\ref{thm:index_of_eigenvector}.
\end{proof}

%%%%%%%%%%%%%%%%%%%%%%%%%%%%%%%%%%%%%%%%%%%%%%%%%
\subsection{Reduction to critical manifold}

The tool introduced in this section is a simple idea already used in
\cite{Banetal_cmpXX,BerKucSmi_prep11,BerRazSmi_prep11}.  If we have a
function $f(x_1,\ldots,x_n)$ with a critical point $c$ then under some
general conditions there is a $(n-1)$-dimensional manifold around the
point $c$ on which the local minimum of $f$ is achieved when we vary
the variable $x_1$ and keep the others fixed.  Then the Morse index of
$f$ restricted to this manifold is the same as the Morse index of the
unrestricted function.  On the other hand, if the manifold is the
locus of local maxima with respect to the variable $x_1$, the Morse
index on the manifold is one less than the unrestricted Morse index.
The following theorem is a simple generalization of this idea.  The
proof is adapted from \cite{BerKucSmi_prep11}.

\begin{theorem}[Reduction Theorem]
  \label{thm:reduction}
  Let $X=Y \oplus Y^\perp$. Let $f:X\to \R$ be a smooth functional
  such that $(0,0)\in X$ is its non-degenerate critical point with
  inertia $\Inert_X$.  Further, let, for every $y\in Y$ locally around
  $0$, the functional $f(y,y')$ as a function of $y'$ has a critical
  point at $y'=0$ with inertia $\Inert_{Y^\perp}$, that (locally) does
  not depend on $y$.  Then the Hessian of $f$ is reduced by the
  decomposition $X=Y \oplus Y^\perp$ and the inertia of $f$ with
  respect to the space $Y$ is
  \begin{equation}
    \label{eq:reduced_sig}
    \Inert_Y = \Inert_X - \Inert_{Y^\perp}.
  \end{equation}
\end{theorem}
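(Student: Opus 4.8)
The plan is to reduce the entire statement to a single fact about the block structure of the Hessian of $f$ at the critical point $(0,0)$, and then to read off the inertia identity from that structure. Writing a point of $X$ as $(y,y')$ with $y\in Y$, $y'\in Y^\perp$, I would record the Hessian at $(0,0)$ in block form
\[
  \operatorname{Hess} f(0,0)=
  \begin{pmatrix} A & B \\ B^{\top} & C \end{pmatrix},
\]
where $A=\partial^2_{yy}f(0,0)$ acts on $Y$, $C=\partial^2_{y'y'}f(0,0)$ acts on $Y^\perp$, and $B=\partial^2_{yy'}f(0,0)$ is the coupling block (the matrix is symmetric since $f$ is smooth). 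The whole content of the theorem is that $B=0$ and that, in consequence, the inertia splits additively across the two diagonal blocks.

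The first and only genuinely load-bearing step is extracting $B=0$ from the hypothesis. The assumption is not merely that $y'=0$ is critical for $f(0,\cdot)$, but that it is critical for $f(y,\cdot)$ for \emph{every} $y$ near $0$; equivalently, $\partial_{y'}f(y,0)\equiv 0$ as an identity in $y$ on a neighborhood of $0$. Because $f$ is smooth I may differentiate this identity with respect to $y$ and evaluate at $y=0$, obtaining $\partial_y\partial_{y'}f(0,0)=0$, i.e. $B=0$. This is exactly the assertion that the Hessian is reduced by the decomposition $X=Y\oplus Y^\perp$. I want to emphasize that it is the ``for every $y$'' form of the hypothesis, rather than criticality at the single point $y=0$, that licenses this differentiation.

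With $B=0$ in hand the remaining claims are pure linear algebra. The block $C$ is the Hessian of $y'\mapsto f(0,y')$ at its critical point $y'=0$, so by hypothesis $\operatorname{inertia}(C)=\Inert_{Y^\perp}$. Since the critical locus is flat (the minimizing/maximizing point of $f(y,\cdot)$ sits at $y'=0$ for all $y$), the reduced functional on $Y$ is simply $\tilde f(y):=f(y,0)$, whose Hessian at $y=0$ is precisely $A$; thus $\Inert_Y=\operatorname{inertia}(A)$ by definition. Note there is no Schur-complement correction: the reduced Hessian $A-BC^{-1}B^{\top}$ that would appear for a curved critical manifold collapses to $A$ because $B$ vanishes. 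Finally, a block-diagonal symmetric matrix $\diag(A,C)$ has spectrum equal to the disjoint union of the spectra of $A$ and $C$, so its counts of negative, zero, and positive eigenvalues add blockwise, giving $\Inert_X=\operatorname{inertia}(A)+\operatorname{inertia}(C)=\Inert_Y+\Inert_{Y^\perp}$. Rearranging yields \eqref{eq:reduced_sig}.

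I do not expect a serious obstacle: once $B=0$ is established the argument is bookkeeping. The two points that merit care are the justification of the differentiation step just discussed, and the verification that the flatness of the critical locus makes the reduced functional the honest restriction $f(\cdot,0)$ with Hessian $A$ and no curvature term. I would also note in passing that non-degeneracy of the critical point is not strictly required for the inertia identity itself, since additivity of the full triple holds for any block-diagonal symmetric matrix; the hypothesis merely guarantees the clean ($n_0$-free) form of $\Inert_X$ used in the application.
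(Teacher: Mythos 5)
Your proposal is correct and follows essentially the same route as the paper: the load-bearing step in both is differentiating the identity $\partial_{y'}f(y,0)\equiv 0$ with respect to $y$ to kill the mixed block of the Hessian (the paper's equation~\eqref{eq:mixed_deriv}), after which the inertia identity follows from blockwise additivity of the spectrum of a block-diagonal symmetric matrix. Your added observations --- the explicit absence of a Schur-complement correction and the fact that non-degeneracy is not needed for the additivity itself --- are accurate but do not change the argument.
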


\begin{proof}
  We calculate the mixed derivative of $f$ with respect to one
  variable from $Y$ and the other from $Y^\perp$.  In a slight abuse
  of notation we denote these variables simply by $y$ and $y'$.  We
  have
  \begin{equation}
    \label{eq:mixed_deriv}
    \frac{\partial^2 f}{\partial y\, \partial y'}(0,0) 
    = \left. \frac{\partial}{\partial y}\left[\frac{\partial
          f}{\partial y'}(y,0) \right] \right|_{y=0} = 0,
 \end{equation}
  since $y'=0$ is the critical point of $f(y,y')$ as a function of
  $y'$ for every $y$.  Thus the Hessian of $f$ has a block-diagonal
  form with two blocks that correspond to $Y$ and $Y^\perp$.  The
  spectrum of the Hessian is the union of the spectra of the blocks
  and the inertia is the sum of the inertias of the blocks,
  \begin{equation*}
    \Inert_X = \Inert_Y + \Inert_{Y^\perp}.
  \end{equation*}
  Equation~\eqref{eq:reduced_sig} follows immediately.
\end{proof}

The restriction of the function $f$ to the subspace $Y$ will be called
the \emph{reduction} of $f$ with inertia $\Inert_{Y^\perp}$.  More
often than not, we will be concerned with only the index of a critical
point (the entry $n_{-}$ of the inertia).  In this case we will say
``reduction with index $m$''.  The space $Y$ is the locus of critical
points of $f$ with respect to the variable $y'$ and we will refer to
it as a \emph{critical manifold}.

\begin{remark}
  \label{rem:crit_manifold}
  Theorem~\ref{thm:reduction} can be simply extended to the case when,
  for every fixed $y$, the critical point with respect to $y'$ is
  located at $y'=q(y)$ (rather than $y'=0$).  The function $q(y)$
  defines the critical manifold $\mathcal{Q} = (y,q(y))$.  If $q(y)$
  is a smooth function of $y$ and $q(0)=0$, the change of variables
  \begin{equation*}
    y \mapsto y, \qquad y' \mapsto y' - q(y)
  \end{equation*}
  is non-degenerate (its Jacobian is a triangular matrix with 1s on
  the diagonal) and makes $f$ satisfy the assumptions of
  Theorem~\ref{thm:reduction}.  By Sylvester law of inertia, the
  conclusion of the Theorem is invariant under the change of
  variables, that is inertia at point $0$ of $f\big|_{\mathcal{Q}}$ is
  \begin{equation*}
    \Inert_{\mathcal{Q}} = \Inert_X - \Inert_{Y^\perp}.
  \end{equation*}
\end{remark}

%%%%%%%%%%%%%%%%%%%%%%%%%%%%%%%%%%%%%%%%%%%%%%%%%
%%%%%%%%%%%%%%%%%%%%%%%%%%%%%%%%%%%%%%%%%%%%%%%%%
\section{Proof of the main theorem}
\label{sec:proof_main}

We prove the main result in three steps.  First we show by an explicit
computation that the point $0$ is the critical point of the function
$\lambda_n(\alpha)$, where $\alpha = (\alpha_1,\ldots,\alpha_\beta)
\in (-\pi, \pi]^\beta$ are the magnetic phases.

Then we fix an eigenpair $\lambda = \lambda_n(\G)$ and $f$.  We cut
$\beta$ edges of the graph turning it into a tree $T$, but modifying
the potentials so that the eigenfunction $f$ is also an eigenfunction
of the tree $T$.  It now corresponds to an eigenvalue number $m$, that
is $\lambda_m(T)=\lambda$.  Considering the eigenvalue $\lambda_m(T)$
as a function of the potentials, we find its inertia.  This is done
by considering the inertia of the corresponding quadratic form in
the real space.  The result of this step is related to the results on
critical partitions, \cite{BerRazSmi_prep11}.

Finally, we relate the inertia of the function $\lambda_m(T)$ to the
inertia of the function $\lambda_n(\alpha)$ at the corresponding
critical points.  This is done by relating the inertias of the
quadratic forms, but now in the complex space represented as a real
space of double dimension.

We recall that $S$ is a set of $\beta$ edges whose removal turns the
graph $\G$ into a tree.  By $\Gmag$ we denote the graph obtained
from $\G$ by introducing magnetic phases $\alpha_1,\ldots,
\alpha_\beta$ on the edges from the set $S$.  Similarly, by $\Gcut$ we
denote the tree graph obtained by cutting every edge from $S$ in a
manner described in section~\ref{sec:two_operations} (see
equation~\eqref{eq:subblock_cut} and around).  For future reference we
list the quadratic forms of the original graph, the graph $\Gcut$ and
the graph $\Gmag$.  We put them in the form that highlights the
similarities between the three.
\begin{align}
  \label{eq:h_clean}
  h(\vx) &= \sum_u q_u x_u^2 - \sum_{(u,v)\in E\setminus S} 2x_u x_v
  - \sum_{(u,v)\in S} 2x_u x_v,\\
  \label{eq:hcut}
  \hcut(\vx) &= \sum_u q_u x_u^2 - \sum_{(u,v)\in E\setminus S} 2x_u x_v
  - \sum_{e_j=(u,v)\in S} \left(-\gamma_j
    x_u^2 - x_v^2 / \gamma_j \right),\\
  \label{eq:hmag}
  \hcut(\vz) &= \sum_u q_u |z_u|^2 
  - \sum_{(u,v)\in E\setminus S} 2\Re\left(\overline{z_u} z_v \right)
  - \sum_{e_j = (u,v)\in S} 2 \Re\left( \overline{z_u} e^{i \alpha_j} z_v \right).
\end{align}
It is essential that the last quadratic form be considered on the
complex vector space $\C^d$ (that can be identified with a real space
of double dimension).  The first two forms can be considered on both
real and complex spaces, with obvious modifications in the latter case.

%%%%%%%%%%%%%%%%%%%%%%%%%%%%%%%%%%%%%%%%%%%%%%%%%
\subsection{Critical points}
\label{sec:crit_pnts}

Let $f$ be an eigenfunction of the graph $\G$.  We have seen in
Theorem~\ref{thm:cut_mag_interlacing} and its proof that the points
$\alpha=0$ and $\gamma=\hg$ (see equation~\eqref{eq:gamma})
are special: at these points $f$ is an eigenfunction of the graphs
$\Gmaga$ and $\Gcuta$.  Moreover, they are critical points of
the corresponding eigenvalues considered as function of the parameters
$\alpha$ and $\gamma$ respectively.  The result of this section
generalizes this observation.

% For counting the
% eigenvalues of $\Gcut$ we adopt a convention similar to
% Definition~\ref{def:numbering}.  \todo{Should we?}

% \begin{definition}
%   \label{def:numbering_many}
%   The eigenvalues of the graph $\Gcut$ will be numbered in increasing
%   order starting with the number $k$ which is the number of negatives
%   among $\gamma_1,\ldots, \gamma_\beta$.
% \end{definition}

\begin{theorem}
  \label{thm:cp}
  Let $f$ be an eigenfunction of $H(\G)$ that corresponds to a simple
  eigenvalue $\lambda = \lambda_n(\G)$.  Assume $f$ is non-zero on
  vertices of the graph $\G$.  For every edge
  $(u_j,v_j)\in S$, $j=1,\ldots, \beta$, let
  \begin{equation}
    \label{eq:special_gammas}
    \hg_j = f_{v_j} / f_{u_j}.
  \end{equation}
  Let $p$ denote the number of negatives among the values
  $\hg_j$,
  \begin{equation*}
    p = \#\{\hg_j < 0, \ j=1,\ldots,\beta\}.
  \end{equation*}
  Then 
  \begin{equation}
    \label{eq:eig_transfer}
    \lambda_n(\G) = \lambda_{\zeros_n-p+1}\left(\Gc_{\hg_1,\ldots,\hg_\beta}\right),
  \end{equation}
  where $\zeros_n$ is the number of sign changes of $f$ with respect to the
  graph $\G$.  Moreover, the point $(\hg_1,\ldots,\hg_\beta)$ is a
  critical point of the function
  $\lambda_{\zeros_n-p+1}\left(\Gcut\right)$.

  Similarly for $\Gmag$, 
  \begin{equation}
    \label{eq:eig_transfer_mag}
    \lambda_n(\G) = \lambda_{n}\left(\Gm^{0,\ldots,0}\right)
  \end{equation}
  and $(0,\ldots,0)$ is a critical point of the function $\lambda_n(\Gmag)$.
\end{theorem}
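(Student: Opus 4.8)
The plan is to establish the four assertions in turn, handling the cut graph $\Gcut$ first and the magnetic graph $\Gmag$ second, since the latter needs only the perturbation formula~\eqref{eq:pert_first_deriv}. First I would observe that the choice $\gamma_j=\hg_j=f_{v_j}/f_{u_j}$ makes $f$ an eigenfunction of $H(\Gcut)$ with eigenvalue $\lambda$. Indeed, as noted in Section~\ref{sec:two_operations}, with this choice $f$ lies in the null-space of each rank-one perturbation $B^c$ associated to the edge $(u_j,v_j)\in S$; since these perturbations act on disjoint subblocks, $f$ is annihilated by their sum, so $H(\Gcut)f=H(\G)f=\lambda f$. Thus $\lambda$ is an eigenvalue of $\Gc_{\hg_1,\ldots,\hg_\beta}$, and it remains only to identify its index. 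To do this I would count the sign changes of $f$ on the tree $\Gcut$ and appeal to Fiedler's theorem: since $\Gcut$ is a tree and $f$ is nonzero on its vertices, the $\beta=0$ case of~\eqref{eq:nodal_bound_mu} (Fiedler \cite{Fie_cmj75a}) says that the number of sign changes of $f$ on $\Gcut$ equals $m-1$, where $m$ is the spectral index of $\lambda$ on $\Gcut$. The sign changes of $f$ on $\Gcut$ are exactly those sign changes of $f$ on $\G$ not lying on a cut edge: an edge $(u_j,v_j)\in S$ is a sign change of $f$ on $\G$ precisely when $f_{u_j}$ and $f_{v_j}$ have opposite signs, i.e. when $\hg_j<0$. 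Hence cutting $S$ removes exactly $p$ sign changes, so the tree count is $\zeros_n-p$, giving $m-1=\zeros_n-p$ and $m=\zeros_n-p+1$, which is~\eqref{eq:eig_transfer}.

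For the critical-point claim on $\Gcut$ I would apply~\eqref{eq:pert_first_deriv} to each parameter $\gamma_j$ separately. Since $\gamma_j$ enters only the $[u_j,v_j]$ subblock~\eqref{eq:subblock_cut}, the derivative is supported on that subblock; using $H(\Gcut)=H(\G)-B^c$ together with $\partial_{\gamma_j}B^c=\diag\!\left(1,-1/\gamma_j^2\right)$ on that subblock, formula~\eqref{eq:pert_first_deriv} yields $\partial_{\gamma_j}\lambda=\langle f,-\partial_{\gamma_j}B^c\,f\rangle=-|f_{u_j}|^2+|f_{v_j}|^2/\gamma_j^2$, reproducing~\eqref{eq:deriv_at_max}. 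Substituting $\gamma_j=\hg_j=f_{v_j}/f_{u_j}$ (so that $\gamma_j^2=f_{v_j}^2/f_{u_j}^2$) makes this vanish for every $j$, so the full gradient is zero and $(\hg_1,\ldots,\hg_\beta)$ is a critical point of $\lambda_{\zeros_n-p+1}(\Gcut)$.

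For $\Gmag$ the identity~\eqref{eq:eig_transfer_mag} is immediate, since at zero phases $H(\Gm^{0,\ldots,0})=H(\G)$ and hence the spectra coincide. For the critical point I would again use~\eqref{eq:pert_first_deriv}: differentiating the subblock~\eqref{eq:subblock_mag} with respect to $\alpha_j$ and evaluating at $\alpha_j=0$ leaves only the off-diagonal contribution $\langle f,\partial_{\alpha_j}H\,f\rangle=-\rmi\,\overline{f_{u_j}}f_{v_j}+\rmi\,\overline{f_{v_j}}f_{u_j}$, which vanishes because $f$ is real at $\va=0$. As each $\alpha_j$ affects only its own subblock, the gradient vanishes and $(0,\ldots,0)$ is a critical point of $\lambda_n(\Gmag)$.

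I expect the delicate step to be the index identification in~\eqref{eq:eig_transfer}, which rests on applying Fiedler's theorem to the cut tree. This requires that $f$ not vanish on the vertices of $\Gcut$ (which holds by hypothesis, since the vertices of $\Gcut$ are those of $\G$) and that the eigenvalue branch be isolated enough for the spectral index $m$ to be well defined. If $\lambda$ happens to be a degenerate eigenvalue of the tree $\Gcut$, one must argue via the analytic eigenvalue branch through $f$ rather than the literally ordered eigenvalue, and likewise the use of~\eqref{eq:pert_first_deriv} presumes simplicity of $\lambda$ along that branch; verifying that the hypotheses ($\lambda_n(\G)$ simple, $f$ nonvanishing on vertices) suffice to keep this branch simple at the critical point is the main point to check.
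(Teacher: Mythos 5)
Your proposal is correct and follows essentially the same route as the paper's proof: direct verification that $f$ is an eigenfunction of the cut graph, Fiedler's tree result to place $\lambda$ at position $\zeros_n-p+1$ (with the tree sign-change count $\zeros_n-p$ obtained exactly as you argue, the cut edges carrying sign changes being those with $\hg_j<0$), and the first-order perturbation formula~\eqref{eq:pert_first_deriv} applied parameter by parameter to kill both gradients (your ``disjoint subblocks'' remark is unnecessary and not always true, since edges of $S$ may share vertices, but it is also not needed: $f$ lying in the null space of each $B^c_j$ individually already gives $\sum_j B^c_j f = 0$). The one loose end you flag --- simplicity of the eigenvalue on the tree, needed to differentiate the literally ordered eigenvalue branch --- is closed in the paper exactly as you anticipate: by a theorem of Fiedler, an eigenvalue of a tree operator whose eigenfunction does not vanish on the vertices is automatically simple, so the branch through $f$ is analytic in the parameters and no degenerate case arises.
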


\begin{proof}
  It can be verified directly that $f$ is an eigenfunction of the
  graph $\Gc_{\hg_1,\ldots,\hg_\beta}$.  The nodal bound
  \eqref{eq:nodal_bound_mu} with $\beta=0$ (proved by Fiedler in
  \cite{Fie_cmj75a}, see also \cite{Ber_cmp08}) shows that the
  eigenvalue corresponding to the function $f$ has number $\mu'+1$ in
  the spectrum of the tree $\Gc_{\hg_1,\ldots,\hg_\beta}$.  Here $\mu'$
  is the number of sign changes of $f$ with respect to the tree.  In general,
  this number is different from $\zeros_n$ because we might have cut some
  of the edges on which $f$ was changing sign.  However, according to
  (\ref{eq:special_gammas}), these edges gave rise to \emph{negative}
  values of $\hg_j$, therefore $\mu'= \zeros_n-p$, proving
  equation~\eqref{eq:eig_transfer}.  Equation
  \eqref{eq:eig_transfer_mag} is trivial since $\Gm^{0,\ldots,0} = \G$.

  To prove criticality of the points, we calculate the derivatives.
  Because the corresponding eigenvalue is simple (for the tree, this
  follows from a theorem of Fiedler \cite{Fie_cmj75a} and the fact
  that $f$ does not vanish on vertices), the corresponding eigenvalues
  are analytic functions of the parameters and can be differentiated.
  
  Derivative of $\lambda_{\zeros_n-p+1}\left(\Gcut\right)$ with respect to
  $\gamma_j$ has been calculated in equation~\eqref{eq:deriv_at_max},
  resulting in
  \begin{equation}
    \label{eq:deriv1_cut}
    \left. \frac{\partial}{\partial \gamma_j}
      \lambda_{\zeros_n-p+1}\left(\Gcut\right) \right|_{(\hg_1,\ldots,\hg_\beta)}
    = -|f_{u_j}|^2 + |f_{v_j}|^2 / \hg_j^2 = 0,
  \end{equation}
  where we used the definition of $\hg$,
  equation~\eqref{eq:special_gammas}.

  The derivative of $\lambda_n\left(\Gmag\right)$ can be evaluated
  similarly using \eqref{eq:pert_first_deriv}, leading to
  \begin{equation}
    \label{eq:deriv1_mag}
    \left. \frac{\partial}{\partial \alpha_j} 
      \lambda_{n}\left(\Gmag\right) \right|_{(0,\ldots,0)}
    = -i \overline{f_{u_j}} f_{v_j} + i f_{u_j} \overline{f_{v_j}}
    = \Imag\left(\overline{f_{u_j}} f_{v_j}\right) = 0,
  \end{equation}
  since the eigenfunction $f$ is real-valued.
\end{proof}

%%%%%%%%%%%%%%%%%%%%%%%%%%%%%%%%%%%%%%%%%%%%%%%%%
\subsection{Index of the eigenvalue on the tree}

In this section we elaborate on the first part of the result of
Theorem~\ref{thm:cp}, namely that $(\hg_1,\ldots,\hg_\beta)$ is a
critical point of the function $\lambda_{\zeros_n-1}\left(\Gcut\right)$.

\begin{theorem}
  \label{thm:inertia_cut_tree}
  Let $f$ be an eigenfunction of $H(\G)$ that corresponds to a simple
  eigenvalue $\lambda=\lambda_n(\G)$.  Assume $f$ is non-zero on
  vertices of the graph $\G$ and has $\zeros_n$ sign changes.  For every edge
  $(u_j,v_j)\in S$, $j=1,\ldots, \beta$, let
  \begin{equation}
    \label{eq:special_gammas_2}
    \hg_j = f_{v} / f_{u}.
  \end{equation}
  As before, $p$ denotes the number of negatives among the values
  $\hg_j$.  Then the point $(\hg_1,\ldots,\hg_\beta)$ as a critical
  point of the function $\lambda_{\zeros_n-p+1}\left(\Gcut\right)$ has
  index $n-1+\beta-\zeros_n$.
\end{theorem}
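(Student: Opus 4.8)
The plan is to compute the index of the critical point $(\hg_1,\ldots,\hg_\beta)$ by passing to the associated quadratic form $\hcut(\vx)$ from equation \eqref{eq:hcut} on the real space $\R^d$ and applying the Reduction Theorem (Theorem \ref{thm:reduction}) to peel off the sphere-normalization direction from the $\beta$-dimensional $\gamma$-variation. The starting observation is that at the critical value $\gamma_j=\hg_j$ the eigenfunction $f$ is, by Theorem \ref{thm:cp}, the normalized eigenvector realizing the eigenvalue $\lambda_{\zeros_n-p+1}(\Gcut)$ of the tree; by Theorem \ref{thm:index_of_eigenvector} its inertia as a critical point of the quadratic form $\hcut$ restricted to the unit sphere $\|\vx\|=1$ is $(m-1,\,0,\,d-m)$, where $m=\zeros_n-p+1$ is the eigenvalue number on the tree. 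So the Morse index on the sphere is $m-1=\zeros_n-p$.

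First I would set up a single $(\beta+d-1)$-dimensional picture: consider the function $(\gamma_1,\ldots,\gamma_\beta,\vx)\mapsto \hcut(\vx)$ with $\vx$ ranging over the unit sphere and $\vec\gamma$ near $(\hg_1,\ldots,\hg_\beta)$. The point is to compute the inertia of this combined function at its critical point and then use the Reduction Theorem in the appropriate direction. For fixed $\vec\gamma=\hg$, the inertia in the $\vx$-directions is the sphere inertia $(\zeros_n-p,\,0,\,d-1-(\zeros_n-p))$ just computed. The crucial input is that the diagonal perturbation contributes a definite sign in each $\gamma_j$-direction: differentiating $\hcut$ twice in $\gamma_j$ at fixed $\vx=f$ gives, from the $-(-\gamma_j x_{u_j}^2 - x_{v_j}^2/\gamma_j)$ term, a second derivative proportional to $-2|f_{v_j}|^2/\hg_j^3$, whose sign is $-\sign(\hg_j)$. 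Thus each edge with $\hg_j>0$ contributes a \emph{negative} eigenvalue (a maximum direction, index $+1$) and each edge with $\hg_j<0$ contributes a \emph{positive} eigenvalue (a minimum direction, index $0$) to the $\gamma$-block, provided the mixed $\gamma$--$\vx$ derivatives vanish so the Hessian block-diagonalizes.

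The main obstacle is precisely verifying that the Hessian block-diagonalizes between the $\vec\gamma$ and $\vx$ directions, so that the Reduction Theorem applies cleanly. I would establish this by checking that for each fixed $\vec\gamma$ near $\hg$ the restriction of $\hcut$ to the sphere has its eigenvalue-$m$ critical point persisting (the eigenvalue stays simple, by the Fiedler-based simplicity argument already invoked in Theorem \ref{thm:cp}), and that $f=f(\vec\gamma)$ varies smoothly; then the mixed derivative $\partial^2/\partial\gamma_j\,\partial x$ vanishes at the critical point by the computation in \eqref{eq:mixed_deriv}, since $\vx=f(\vec\gamma)$ is a critical point in $\vx$ for every $\vec\gamma$. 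Once block-diagonality holds, the index of the full function is the sphere index $\zeros_n-p$ plus the number of negative $\gamma$-eigenvalues, which is exactly the number of positive $\hg_j$, namely $\beta-p$. Finally, reducing in the sphere-normalization direction (which costs nothing since that direction already sits inside the $\vx$-block) the index of $\lambda_{\zeros_n-p+1}(\Gcut)$ as a function of $\vec\gamma$ alone is obtained from the $\gamma$-block as the \emph{complementary} count: the $\gamma_j>0$ directions are maxima of the eigenvalue and $\gamma_j<0$ are minima, so the Morse index in $\vec\gamma$ equals the number of negative $\hg_j$ together with the sphere index, giving $p + (\zeros_n-p) \;=\; \ldots$; rearranging against the nodal bound $\zeros_n\le n-1+\beta$ yields the claimed value $n-1+\beta-\zeros_n$. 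I would double-check the bookkeeping by testing $\beta=1$ against the explicit duality computation in Section \ref{sec:duality}, where $\sigma_n\in\{0,1\}$ and the index comes out to $\beta-\sigma_n$, consistent with $n-1+\beta-\zeros_n$.
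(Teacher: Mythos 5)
There is a genuine gap, and it sits exactly where your proposal claims the Hessian ``block-diagonalizes'' between the $\vec\gamma$- and $\vx$-directions. The mixed derivatives $\partial^2 \hcut/\partial\gamma_j\,\partial x_k$ do \emph{not} vanish at $(f,\hg_1,\ldots,\hg_\beta)$ in the original coordinates: the computation \eqref{eq:mixed_deriv} applies only after the change of variables of Remark~\ref{rem:crit_manifold} that straightens \emph{one} critical manifold at a time, and in those coordinates the complementary block is the Hessian of the \emph{restriction} of $\hcut$ to that manifold (a Schur complement), not the naive partial Hessian in the remaining variables. The two blocks you add live in two incompatible straightenings: straightening the eigenvector manifold $\vx=\fcut(\vec\gamma)$ makes the $\vx$-block have index $\zeros_n-p$, but then the $\gamma$-block is precisely the Hessian of $\lambda_{\zeros_n-p+1}(\Gcut)$ --- the unknown you are trying to compute; straightening instead the manifold $\gamma_j=g_j(\vx)=x_{v_j}/x_{u_j}$ makes the $\gamma$-block have index $\beta-p$, but then the $\vx$-block is not the sphere index $\zeros_n-p$. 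If your claim $M=(\zeros_n-p)+(\beta-p)$ for the ambient index were correct, the eigenvector reduction would return index $M-(\zeros_n-p)=\beta-p$ for the eigenvalue as a function of $\vec\gamma$, which is not the claimed value and is false in general. The closing step is then a non sequitur: your bookkeeping produces $p+(\zeros_n-p)=\zeros_n$, a quantity in which $n$ never appears, and no ``rearranging against'' the \emph{inequality} $\zeros_n\le n-1+\beta$ can convert it into the equality $n-1+\beta-\zeros_n$.

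The missing idea is the paper's second reduction, which is what injects $n$ into the computation. One applies Theorem~\ref{thm:reduction} \emph{twice from the same ambient critical point}, whose index $M$ is a priori unknown. Reducing along the eigenvector manifold (your first step, index $\zeros_n-p$ by Theorem~\ref{thm:index_of_eigenvector}) gives $\mathrm{index}\bigl(\lambda_{\zeros_n-p+1}(\Gcut)\bigr)=M-(\zeros_n-p)$. To determine $M$, reduce the other way: for fixed $\vx$ optimize each $\gamma_j$ separately; the critical point $g_j=x_{v_j}/x_{u_j}$ has fiberwise inertia $(\beta-p,0,p)$ (your sign computation here is correct), and on this critical manifold each term $T(\gamma_j)=-\gamma_j x_{u_j}^2-x_{v_j}^2/\gamma_j$ evaluates to $-2x_{u_j}x_{v_j}$, so $\hcut(\vx)$ \emph{collapses to the quadratic form $h(\vx)$ of the original graph} $\G$. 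Since $f$ is the $n$-th eigenfunction of $\G$ with $\lambda_n$ simple, the restricted index is $n-1$, whence $M=(n-1)+(\beta-p)$ and finally $M-(\zeros_n-p)=n-1+\beta-\zeros_n$. Without this bridge through $h(\vx)$, no local second-derivative computation in the $\gamma$-variables alone can see the eigenvalue number $n$, so your route cannot close.
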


\begin{proof}
  Denote by $d$ the number of vertices of the graph $\G$.  Consider
  $\hcut(\vx)$, which is the quadratic form on the Hamiltonian of
  $\Gcut$, as a function of $d+\beta$ real variables
  $(x_1,\ldots,x_d,\gamma_1,\ldots,\gamma_\beta)$ on the manifold
  $x_1^1+\ldots+x_{d}^2=1$.  We note that the point $(f_1,\ldots,
  f_d,\hg_1,\ldots,\hg_\beta)$ is a critical point of $\hcut(\vx)$, as
  can be easily shown by explicit computation.  Indeed, the value of
  the Lagrange multiplier is the eigenvalue $\lambda_n$ and the
  gradient of
  \begin{equation*}
    F(x_1,\ldots,x_d, \gamma_1,\ldots \gamma_\beta) = \hcut(\vx) -
    \lambda_n (x_1^1+\ldots+x_{d}^2)
  \end{equation*}
  is zero: the first $d$ equations become the eigenvalue condition $H
  f = \lambda_n f$ and the last $\beta$ are the same as
  \eqref{eq:deriv1_cut}.

  Denote the index of the point $(f_1,\ldots,
  f_d,\hg_1,\ldots,\hg_\beta)$ by $M$.  For every value of
  $(\gamma_1,\ldots,\gamma_\beta)$ locally around the point
  $(\hg_1,\ldots,\hg_\beta)$ the $(\zeros_n-p+1)$-th eigenvector $\fcut$
  of $\Gcut$ is a critical point of $\hcut(\vx)$ as a function of
  $\vx$.  According to Theorem~\ref{thm:index_of_eigenvector}, the
  critical point has index $\zeros_n-p$.

  According to standard perturbation theory (see, e.g.,
  \cite{Kato_book}) the eigenvector $\fcut$ is a smooth (indeed,
  analytic) function of $(\gamma_1,\ldots,\gamma_\beta)$.  This allows
  us to use Theorem~\ref{thm:reduction} via
  Remark~\ref{rem:crit_manifold}, concluding that the critical point
  $(\hg_1,\ldots,\hg_\beta)$ of $\hcut(\fcut)$ has index
  $M-(\zeros_n-p)$.  At this point we observe (see
  Remark~\ref{rem:value_at_crit}) that
  \begin{equation*}
    \hcut(\fcut) = \lambda_{\zeros_n-p+1}\left(\Gcut\right),
  \end{equation*}
  which is the function whose index we strive to evaluate.

  On the other hand, consider $\vx$ varying locally around the point
  $f$, so that the elements of $\vx$ remain bounded away from zero.
  For each fixed $\vx$ we look for a critical point with respect to
  the variables $(\gamma_1,\ldots,\gamma_\beta)$.  The terms of
  $\hcut(\vx)$ that depend on a given $\gamma$ have the form
  \begin{equation}
    \label{eq:gamma_term}
     T(\gamma) = -\gamma x_u^2 - x_v^2 / \gamma.
  \end{equation}
  The critical point is $g = g(\vx)=x_v / x_u$, which is a smooth
  function of $\vx$.  The points $(x_1,\ldots,x_d, g_1,\ldots
  g_\beta)$ define another critical manifold to apply
  Theorem~\ref{thm:reduction} to.  Note that the critical manifold
  includes the point $(f_1,\ldots, f_d,\hg_1,\ldots,\hg_\beta)$.
  Moreover, the critical point with respect to a given $\gamma$ is a
  maximum if $g>0$ and a minimum if $g<0$.  Each point is
  nondegenerate and, moreover, the sign of $g_j$ is locally the same
  as the sign of $\hg_j$ for all $j$.  Different variables $\gamma$
  are not coupled, thus the Hessian is diagonal.  Therefore, the
  inertia of the points on the critical manifold is $(\beta-p, 0, p)$
  --- it is a minimum with respect to $p$ variables and maximum with
  respect to $\beta-p$.

  \begin{figure}[t]
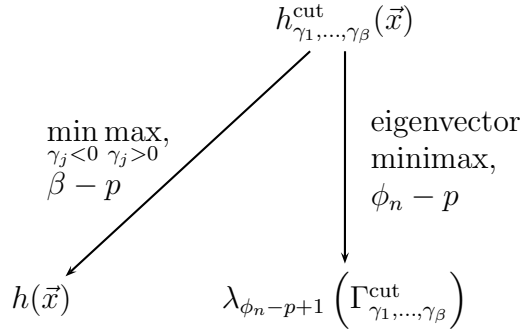

    \label{fig:inertia_cut_tree}
   \begin{psmatrix}[rowsep=3cm,colsep=2cm]
      & $\hcut(\vx)$ \\
      $h(\vx)$ & $\lambda_{\zeros_n-p+1}\left(\Gcut\right)$
      \psset{arrows=->,nodesep=3pt}
      \ncline{1,2}{2,1}<{
        \begin{minipage}{1.0cm}
          $\displaystyle{\min_{\gamma_j<0}\max_{\gamma_j>0}}$,\\
          $\beta-p$
        \end{minipage}
      }
      \ncline{1,2}{2,2}>{
        \begin{minipage}{2.0cm}
          eigenvector\\ 
          minimax,\\
          $\zeros_n-p$
        \end{minipage}
      }
    \end{psmatrix}
   \caption{Schematic diagram of the proof of
      Theorem~\ref{thm:inertia_cut_tree}.  The reductions are
      indicated by arrows, with the description of the parameters that
      are being reduced and the index of the reduction.  Since we know
      the index of the critical point of $h(\vx)$, we can follow the
      diagram, applying the Reduction Theorem, to calculate the index
      of $\lambda_{\zeros_n-p+1}\left(\Gcut\right)$.}
  \end{figure}

  Consider now the function $\hcut(\vx)$ on the critical manifold.
  When $\gamma = g$, the term \eqref{eq:gamma_term} evaluates to
  \begin{equation*}
    T(g) = -2x_ux_v
  \end{equation*}
  and we find that, on the critical manifold, the function
  $\hcut(\vx)$ coincides with the quadratic form of the original
  graph, $h(\vx)$.  The point $\vx=f$, being the $n$-th eigenfunction
  of the graph is a critical point of $h(\vx)$ and has index $n-1$.
  Applying Theorem~\ref{thm:reduction} we obtain
  \begin{equation*}
    n-1 = M - (\beta-p).
  \end{equation*}

  Coming back to $(\hg_1,\ldots,\hg_\beta)$ as a critical point of
  $\lambda_{\zeros_n-p+1}\left(\Gcut\right)$ we conclude that its index
  is
  \begin{equation*}
    M-(\zeros_n-p) = (n-1) + (\beta-p) - (\zeros_n-p) = n-1 + \beta - \zeros_n.
  \end{equation*}
  The steps of the proof are summarized in
  Fig.~\ref{fig:inertia_cut_tree}.
\end{proof}

\begin{remark}
  In \cite{BerRazSmi_prep11} the eigenvalue of the tree graph $\Gcut$
  was interpreted as the energy of the ``partition'' with the given
  number of domains.  Theorem~\ref{thm:inertia_cut_tree} gives another
  route for the proof of the results of \cite{BerRazSmi_prep11}.
\end{remark}

%%%%%%%%%%%%%%%%%%%%%%%%%%%%%%%%%%%%%%%%%%%%%%%%%
\subsection{Index of the eigenvalue as a function of the magnetic
  field}

Now we move from the critical point on the the tree to the critical
point of the eigenvalue of the graph with magnetic phases.  We
apply the same method, practically retracing our steps, but now the
quadratic form is a function of complex variables.

\begin{theorem}
  \label{thm:inertia_mag}
  Let $f$ be an eigenfunction of $H(\G)$ that corresponds to a simple
  eigenvalue $\lambda=\lambda_n(\G)$.  Assume $f$ is non-zero on
  vertices of the graph $\G$ and has $\zeros_n$ sign changes.  Let $\Gmag$ be
  the graphs with the magnetic phases $\alpha_1,\ldots,\alpha_\beta$
  introduced on the edges from the set $S$.  Then the index of
  $(0,\ldots,0)$ as a critical point of the function
  $\lambda_n(\Gmag)$ is the nodal surplus $\zeros_n-(n-1)$.
\end{theorem}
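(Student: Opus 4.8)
The plan is to retrace the proof of Theorem~\ref{thm:inertia_cut_tree}, but now with the complex magnetic quadratic form $\hmag(\vz)$ in place of the real cut form, working on $\C^d$ identified with $\R^{2d}$. I view $\hmag$ as a function of the $2d+\beta$ real variables $(\vz,\va)$, with $\vz$ constrained to the unit sphere $\|\vz\|=1$ and $\va=(\alpha_1,\ldots,\alpha_\beta)$ free, and I take as base point the critical point $(\vz,\va)=(f,0)$ produced by Theorem~\ref{thm:cp}. Writing $d$ for the number of vertices and $M=(M_-,M_0,M_+)$ for the inertia of this critical point, the strategy is to evaluate $M$ in two ways, by reducing either the $\vz$- or the $\va$-variables via Theorem~\ref{thm:reduction} and Remark~\ref{rem:crit_manifold}; here I only use the additivity $\Inert_X=\Inert_Y+\Inert_{Y^\perp}$ established in the proof of Theorem~\ref{thm:reduction}, which needs only block-diagonality of the Hessian and not nondegeneracy.

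Reducing over $\vz$: for every $\va$ near $0$ the $n$-th eigenvalue is simple, so $\fmag(\va)$ is a smooth critical manifold on which $\hmag$ takes the value $\lambda_n(\Gmag)$, and by Theorem~\ref{thm_index_of_eigenvector_complex} the transverse inertia is $(2(n-1),1,2(d-n))$, independent of $\va$; its one zero is the global-phase direction along which $\hmag$ is constant. This gives $M=\Inert\big(\lambda_n(\Gmag)\big)+(2(n-1),1,2(d-n))$, so the index I want equals $M_--2(n-1)$, and once $M_0=1$ is confirmed below the two middle entries match and $\lambda_n(\Gmag)$ is nondegenerate at $\va=0$. Reducing over $\va$: for fixed $\vz$ with nonzero entries the only $\alpha_j$-dependent term is $-2\Re(\overline{z_{u_j}}e^{\rmi\alpha_j}z_{v_j})$, whose critical set is $e^{\rmi\alpha_j}\overline{z_{u_j}}z_{v_j}\in\R$; the branch $\alpha_j=\alpha_j^\ast(\vz)$ through $\va=0$ is smooth near $\vz=f$ and is a maximum when $\hg_j<0$ and a minimum when $\hg_j>0$, so its inertia is $(p,0,\beta-p)$ with $p=\#\{\hg_j<0\}$. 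Hence $M=\Inert_\Psi(f)+(p,0,\beta-p)$, where $\Psi$ is the restriction of $\hmag$ to this critical manifold, and combining the two reductions the desired index becomes $\big(\Inert_\Psi(f)\big)_-+p-2(n-1)$.

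Everything thus rests on computing $\Inert_\Psi(f)$, which is the step I expect to require the real work. Writing $z_u=x_u+\rmi y_u$, I will show that the Hessian of $\Psi$ at $f$ is block-diagonal with respect to the real parts $x$ and the imaginary parts $y$. On the critical manifold the $S$-edge term becomes $\mp 2|z_{u_j}||z_{v_j}|$ (minus when $\hg_j>0$); differentiating this at the real point $f$ (where it is smooth because $f$ has no zero entries), the real Hessian acquires an off-diagonal $-2$ on each edge of $S$ — the sign of the term and the sign of the mixed derivative combine — so the $x$-block is exactly the Hessian of the non-magnetic form $h(\vx)$, with sphere inertia $(n-1,0,d-n)$ by Theorem~\ref{thm:index_of_eigenvector}. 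The imaginary Hessian instead acquires diagonal shifts $-2\hg_j$ at $u_j$ and $-2/\hg_j$ at $v_j$; since $\hg_j=f_{v_j}/f_{u_j}$ these are precisely (twice) the potential modifications defining $\Gcut$, so the $y$-block equals $2\big(H(\Gcut)-\lambda_n\big)$. By \eqref{eq:eig_transfer} we have $\lambda_n(\G)=\lambda_{\zeros_n-p+1}(\Gcut)$, so this block has index $\zeros_n-p$ and one zero (the surviving gauge mode). Therefore $\big(\Inert_\Psi(f)\big)_-=(n-1)+(\zeros_n-p)$ and $M_0=1$, and substitution yields the index of $\lambda_n(\Gmag)$ equal to $(n-1)+(\zeros_n-p)+p-2(n-1)=\zeros_n-(n-1)$, the nodal surplus.

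The principal obstacle is the Hessian computation of the previous paragraph: one must differentiate the non-smooth products $|z_{u_j}||z_{v_j}|$ at $f$ and verify both that the real and imaginary directions decouple and that they reproduce, respectively, the original graph $\G$ and the cut tree $\Gcut$. A second delicate point, present throughout, is the single gauge zero mode: it occupies the middle slot of every inertia above, and the argument must track it so that it cancels in the $\vz$-reduction and certifies that $\va=0$ is a nondegenerate critical point of $\lambda_n(\Gmag)$.
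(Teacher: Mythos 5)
Your proposal is correct, and it takes a genuinely different route from the paper's. The paper never computes the inertia of the bridge function $\widehat{h}$ directly: it reaches it from the cut-tree side, applying the Reduction Theorem four times in a diamond pattern --- starting from the index $n-1+\beta-\zeros_n$ of $\lambda_{\zeros_n-p+1}\left(\Gcut\right)$ supplied by Theorem~\ref{thm:inertia_cut_tree}, climbing to the joint critical point of the complexified $\hcut(\vz)$ (adding $2(\zeros_n-p)$ via Theorem~\ref{thm_index_of_eigenvector_complex}), descending to $\widehat{h}$ along the $\gamma$-critical manifold (subtracting $\beta-p$), climbing to $\hmag(\vz)$ (adding $p$), and descending to $\lambda_n(\Gmag)$ (subtracting $2(n-1)$). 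Your two reductions are exactly the magnetic half of this diamond, but you replace the entire cut-tree half --- Theorem~\ref{thm:inertia_cut_tree} and the complexified $\hcut(\vz)$ --- by a direct evaluation of $\Inert_{\Psi}(f)$, using from the earlier material only \eqref{eq:eig_transfer}. Your Hessian computation checks out: at the real point $f$ one has $\partial |z_u|/\partial x_u = \sign(f_u)$, $\partial |z_u|/\partial y_u = 0$ and $\partial^2 |z_u|/\partial y_u^2 = 1/|f_u|$, so all mixed $x$--$y$ second derivatives of $-2\sign(\hg_j)|z_{u_j}||z_{v_j}|$ vanish; in the $x$-block the $S$-edge contributes the off-diagonal entry $-2\sign(\hg_j)\sign(f_{u_j})\sign(f_{v_j}) = -2$, recovering $2\left(H(\G)-\lambda_n\right)$ on $f^{\perp}$ (index $n-1$), while in the $y$-block it contributes the diagonal shifts $-2\sign(\hg_j)|f_{v_j}|/|f_{u_j}| = -2\hg_j$ and $-2/\hg_j$, recovering $2\left(H\left(\Gc_{\hg_1,\ldots,\hg_\beta}\right)-\lambda_n\right)$ on all of $\R^d$ (index $\zeros_n-p$ by \eqref{eq:eig_transfer}); your total $n-1+\zeros_n-p$ for $\Inert_{\Psi}(f)$ agrees with the paper's intermediate value at the bridge. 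What the paper's route buys is reuse of Theorem~\ref{thm:inertia_cut_tree} (of independent interest, via the connection to partitions) and avoidance of differentiating the non-smooth-looking products $|z_u||z_v|$; what yours buys is economy and transparency --- the real/imaginary decoupling exhibits directly that the real directions ``see'' the original graph $\G$ while the imaginary directions ``see'' the cut tree --- plus two genuine bonuses: your nullity bookkeeping shows the middle inertia entries cancel, so $\va=0$ is a \emph{nondegenerate} critical point of $\lambda_n(\Gmag)$, which the paper does not state, and your explicit remark that inertia additivity needs only block-diagonality of the Hessian repairs a hypothesis of Theorem~\ref{thm:reduction} that, as literally stated, is violated throughout the complex picture by the gauge zero mode. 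Two small points to tighten: the claim that the $y$-block has exactly one zero mode requires $\lambda$ to be simple in the spectrum of the cut tree, which follows from Fiedler's theorem exactly as invoked in the proof of Theorem~\ref{thm:cp} (though only the index, not the nullity, enters your final arithmetic); and the $\vz$-critical set for fixed $\va$ is a gauge circle rather than a point, so a smooth phase normalization of $\fmag$ should be fixed before invoking Remark~\ref{rem:crit_manifold} --- the same gloss the paper itself makes.
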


\begin{proof}
  Let $z$ be the $d$-dimensional vector of complex numbers
  $z_j=x_j+iy_j$.  We consider $\hcut(z)$, the quadratic form of the
  Hamiltonian of $\Gcut$, as a function of $2d+\beta$ real variables
  \begin{equation*}
      x_1,\ldots,x_d,\ 
      y_1,\ldots,y_d,\ 
      \gamma_1,\ldots,\gamma_\beta
  \end{equation*}
  on the manifold $|z_1|^2+\ldots+|z_d|^2=1$.  We also consider
  $\hmag(z)$, the quadratic form of the Hamiltonian of $\Gmag$, as a
  function of $2d+\beta$ real variables on the same manifold.
  See equations~\eqref{eq:hcut} and \eqref{eq:hmag} for explicit
  formulas.

  \begin{figure}[t]
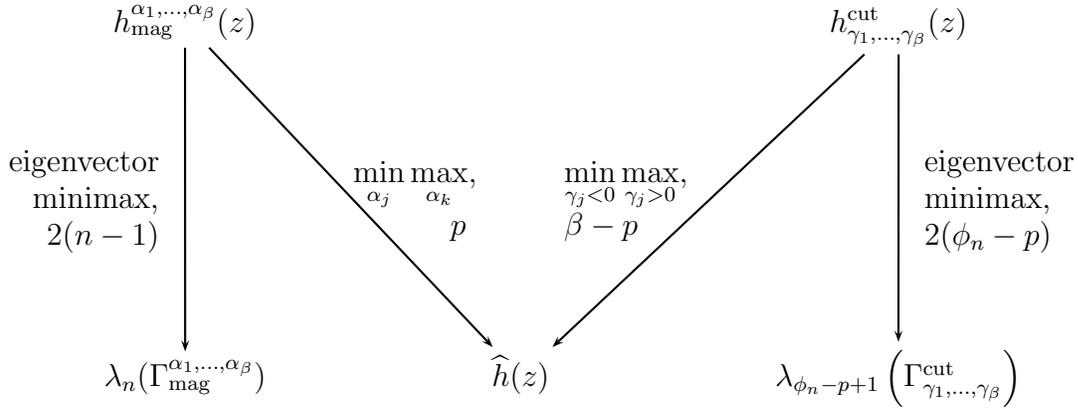

    \label{fig:inertia_mag}
   \begin{center}
   \begin{psmatrix}[rowsep=4cm,colsep=3cm]
      $\hmag(z)$ & & $\hcut(z)$ \\
      $\lambda_n(\Gmag)$ & $\widehat{h}(z)$ 
      & $\lambda_{\zeros_n-p+1}\left(\Gcut\right)$
      \psset{arrows=->,nodesep=3pt}
      \ncline{1,1}{2,1}<{        
        \begin{minipage}[r]{2cm}
          \raggedleft
          eigenvector\\ 
          minimax,\\
          $2(n-1)$
        \end{minipage}
      }
      \ncline{1,1}{2,2}>{
        \hspace{-1cm}
        \begin{minipage}[r]{1.5cm}
          \raggedleft
          $\displaystyle\min_{\alpha_j}\max_{\alpha_k}$,\\
          $p$\vphantom{$\beta$}
        \end{minipage}
      }
      \ncline{1,3}{2,2}<{
        \begin{minipage}{1.0cm}
          $\displaystyle{\min_{\gamma_j<0}\max_{\gamma_j>0}}$,\\
          $\beta-p$
        \end{minipage}
      }
      \ncline{1,3}{2,3}>{
        \begin{minipage}{2.0cm}
          eigenvector\\ 
          minimax,\\
          $2(\zeros_n-p)$
        \end{minipage}
      }
    \end{psmatrix}
    \end{center}
    \caption{Schematic diagram of the proof of
      Theorem~\ref{thm:inertia_mag}.  From
      Theorem~\ref{thm:inertia_cut_tree} we know the index of the
      critical point of $\lambda_{\zeros_n-p+1}\left(\Gcut\right)$
      (bottom right corner).  We then apply the Reduction Theorem four
      times to get the index of $\lambda_n(\Gmag)$ (bottom left
      corner).}
  \end{figure}

  As before, the point $z_j = f_j$, $\gamma_k = \hg_k$ is a critical
  point of the function $\hcut(z)$.  Similarly, $z_j = f_j$,
  $\alpha_k=0$ is a critical point of the function $\hmag(z)$.

  For a fixed vector $z$ in the vicinity of the eigenfunction $f$ we
  find the critical points of functions $\hcut(z)$ and $\hmag(z)$ with
  respect to the corresponding set of parameters.  We choose the
  critical points close to $(\hg_1,\ldots,\hg_\beta)$ and
  $(0,\ldots,0)$ correspondingly.

  Concentrating first on $\hcut(z)$ we observe that now the terms that
  depend on a $\gamma$ have the form
  \begin{equation}
    \label{eq:gamma_term_complex}
        T(\gamma) = -\gamma |z_u|^2 - |z_v|^2 / \gamma.
  \end{equation}
  The critical point is $g(z) = |z_v|/|z_u|$ and it is a maximum or
  minimum depending whether $\gamma>0$ or $\gamma<0$ (the sign of
  $\gamma$ is determined by the sign of the corresponding $\hg$).
  Similarly to the proof of Theorem~\ref{thm:inertia_cut_tree}, the
  index of the critical points that are close to
  $(\hg_1,\ldots,\hg_\beta)$ is $\beta-p$, where $p$ is the number of
  negatives among $\hg_j$.  On the critical manifold the
  term~\eqref{eq:gamma_term_complex} evaluates to
  \begin{equation}
    \label{eq:gamma_term_evaluated}
    T(g) = -2\sign(\hg) |z_u| |z_v|.
  \end{equation}
  
  Moving on to $\hmag(z)$ we first investigate its dependence on a
  single phase $\alpha$.  The terms involving $\alpha$ are of the form
  \begin{equation}
    \label{eq:term_alpha_complex}
    T_m(\alpha) = -\overline{z_u} e^{i\alpha} z_2 - \overline{z_v}
    e^{-i \alpha} z_u = -2|z_u| |z_v| \cos(\alpha - \theta_{uv}),
  \end{equation}
  where $\theta_{uv} = \arg(z_u/z_v)$.  Since $z_u, z_v \in \C$ are
  close to $f_u, f_v \in \R$ correspondingly, the angle $\theta_{uv}$
  is close to $0$ if $\hg = f_v/f_u > 0$ and close to $\pi$ if $\hg =
  f_v/f_u <0$.  The corresponding critical point $a=a(z)$ is then
  minimum ($\hg>0$) or maximum ($\hg<0$).  

  Considering now the critical point of $\hmag(z)$ as a function of
  all parameters $\alpha_1,\ldots,\alpha_\beta$ (the variable $z$ is
  fixed), we find that the Hessian is diagonal and therefore the index
  is the number of coordinate-wise maxima, that is $p$.  Moreover, on
  the critical manifold the term \eqref{eq:term_alpha_complex} takes
  the form identical to equation~\eqref{eq:gamma_term_evaluated}.
  Comparing the forms $\hcut(z)$ and $\hmag(z)$ we conclude that, for
  each value of $z$, they coincide at their respective critical
  points, assuming the value
  \begin{equation*}
    \widehat{h}(z) = \sum_u q_u |z_u|^2 
    - \sum_{(u,v)\in E\setminus S} 2\Re\left(\overline{z_u} z_v \right)
    - \sum_{e_j=(u,v) \in S} 2\sign(\hg_j) |z_u| |z_v|.
  \end{equation*}
  Thus indices of the point $z=f$ with respect to the corresponding
  critical manifold of $\hcut(z)$ and $\hmag(z)$ are equal.  This
  establishes a bridge between the two forms.  Note that the bridging
  function $\widehat{h}(z)$ is different from the form $h(z)$ of the
  original graph $\G$.  The relations between the quadratic forms and
  the eigenvalues as functions of parameters are sketched in
  Fig.~\ref{fig:inertia_mag}.  We will now be repeatedly applying the
  reduction of critical manifolds, Theorem~\ref{thm:reduction}.  The
  indices of the reductions are noted along the arrows in
  Fig.~\ref{fig:inertia_mag} (see below for explanations).
  
  Fixing the parameters $\gamma_1,\ldots,\gamma_\beta$, we find the
  critical point of $\hcut(z)$ as a function of $z$.  Locally around
  the point $z_j = f_j$, $\gamma_k = \hg_k$, the critical point is the
  $(\zeros_n-p+1)$-th eigenvector of the graph $\Gcut$.  According to
  Theorem~\ref{thm_index_of_eigenvector_complex} its index is
  $2(\zeros_n-p)$.  We apply the reduction theorem to the result of
  Theorem~\ref{thm:inertia_cut_tree} to conclude that the index of the
  point $z = f$, $\gamma = \hg$ with respect to all parameters is
  \begin{equation*}
    (n-1+\beta-\zeros_n) + 2(\zeros_n-p) = n-1+\beta+\zeros_n - 2p.
  \end{equation*}
  Now we go down to the ``bridge'' critical point and then back up to
  the critical point $z=f$, $\alpha=0$ of the function $\hmag(z)$, applying the
  reduction theorem both ways.  For the latter critical point we
  obtain the index
  \begin{equation*}
    (n-1+\beta+\zeros_n - 2p) - (\beta-p) + p = n-1 + \zeros_n.
  \end{equation*}

  Finally, for every choice of parameters
  $\alpha_1,\ldots,\alpha_\beta$, we find the critical point with
  respect to $z$ that is close to $z=f$.  This critical point $\fmag$ is the
  $n$-th eigenvector of the graph $\Gmag$ (since $f$ is the $n$-th
  eigenvector of $\Gm^{0,\ldots,0}$).  According to
  Theorem~\ref{thm_index_of_eigenvector_complex} it has index
  $2(n-1)$.  We apply the reduction theorem one last time to conclude
  that the point $(0,\ldots,0)$ as a critical point of $\hmag(\fmag)$
  has index
  \begin{equation*}
    n-1 + \zeros_n - 2(n-1) = \zeros_n - (n-1).
  \end{equation*}
  Since $\hmag(\fmag) = \lambda_n(\Gmag)$, this concludes the proof of
  the theorem.
\end{proof}

%%%%%%%%%%%%%%%%%%%%%%%%%%%%%%%%%%%%%%%%%%%%%%%%%%%
%%%%%%%%%%%%%%%%%%%%%%%%%%%%%%%%%%%%%%%%%%%%%%%%%%%
\section{Discussion}
\label{sec:discuss}

Perhaps the most important feature of Theorem~\ref{thm:main} is that
it allows to access some of the features of the eigenfunction via the
behavior of the corresponding eigenvalue under perturbation.  It is
known that the eigenvalues of the Laplacian are connected to the
statistics of the closed paths on the graph.  The connection is given
through the so-called ``trace formulae'', which can be obtained from a
graph analogue of the Selberg zeta function, the Ihara zeta function
\cite{Iha_jmsj66,Bas_ijm92,StaTer_am96}.  An extension by Bartholdi
\cite{Bar_em99} (see also \cite{MizSat_jac05}) was used in
\cite{OreGodSmi_jpa09} to obtain a family of trace formulae including
the ones for the magnetic Laplacian.  Thus, the closed paths on the
graph determine the spectrum of the magnetic Laplacian which, in turn,
determines the nodal count.  This, in principle, establishes the
existence of a general connection between the nodal count and the
closed paths.  However, we are not aware of any concrete general
formulas.  We note that such a connection has been earlier conjectured
by Smilansky, with special cases reported in
\cite{GnuKarSmi_prl06,AroSmi_prep10}.

We note that the eigenvalue $\lambda_n(\Gmag)$ featuring in this paper
is a well-studied object.  It is the dispersion relation for the
maximal Abelian cover of the graph $\G$.  One of the interesting
questions regarding this object is the ``full spectrum property''
\cite{HigShi_cm04,HigNom_ejc09,Sun_pspm08}: whether the continuous
spectrum of the cover graph of a regular graph --- in our terms, the
union of ranges of the functions $\lambda_n(\Gmag)$ --- contains no
gaps.  This question can be reformulated in terms of eigenfunctions of
graphs $\Gmag$ with all $\alpha_j=0$ or $\pi$ that have minimal and
maximal number of sign changes.

This, in turn, is related to the question of whether the extrema of
the dispersion relation are always achieved at the symmetry points
(namely, all $\alpha_j=0$ or $\pi$).  Examples to the contrary have
been put forward in \cite{HarKucSob_jpa07,ExnKucWin_jpa10}.  However,
an important question remains, how can one characterize the extremal
points that are not points of symmetry?  In this direction, the
duality with the cut graphs (Section~\ref{sec:duality}) might provide
some answers.  One can speculate that critical points of dispersion
relation correspond to critical points of the eigenvalues of the cut
graph $\Gcut$ that \emph{do not} give rise to the eigenfunction of the
graph $\G$.  Further, we conjecture that these ``unclaimed'' critical
points correspond to eigenfunctions of $\G$ modified by enforcing
Dirichlet conditions at some vertices.

The results of the present paper are derived under the assumption that
the eigenvalue is non-degenerate.  While this is the generic situation
with respect to the change in the potential $Q$, it is also
interesting to consider what happens in the degenerate case.  Linear
Zeeman effect (the magnetic perturbation splits eigenvalues) suggests
that the singularities of $\lambda_n(\Gmag)$ are conical.  It should
be possible to define the index of the singularity point that does not
rely on differentiability.

Finally, it would be most interesting to generalize the results of the
present paper to domains of $\R^d$.  However, we immediately encounter
a conceptual problem --- the ``number'' of zeros is infinite.  Still,
some measure of instability of the eigenvalue under magnetic
perturbation should be related to some measure of the zero set of the
corresponding eigenfunction.  This can be intuitively visualized by
approximating the domain eigenfunction by eigenfunctions of a discrete
mesh.

%%%%%%%%%%%%%%%%%%%%%%%%%%%%%%%%%%%%%%%%%%%%%%%%%%
%%%%%%%%%%%%%%%%%%%%%%%%%%%%%%%%%%%%%%%%%%%%%%%%%%
\section{Acknowledgment}
\label{sec:ack}

The impetus for considering magnetic perturbation was given to the
author (perhaps unwittingly) by two people: B.~Helffer and
P.~Kuchment.  B.~Helffer, in a talk at ``Selected topics in spectral
theory'' at Erwin Schr\"odinger Institute (Vienna), Jan 13-15, 2011,
described the conjecture that the minimal spectral partition of a
domain can be obtained by finding the minimal energy of the Laplacian
with a number of Aharonov-Bohm flux lines (minimization with respect
to the number of the flux lines as well as their position).
P.~Kuchment gave the author a preprint by
Rueckriemen\cite{Ruc_prep11}, which discussed objects similar to the
ones introduced in a study of minimal spectral partitions on graphs
\cite{Banetal_cmpXX}.  We are also grateful to P.~Kuchment for many
subsequent enlightening discussions, in particular regarding topics
touched upon in Section~\ref{sec:discuss}.  The visit to ESI in Vienna
has been possible through generosity of the Institute.  The author is
supported by the NSF grant DMS-0907968.

\bibliographystyle{abbrv}
\bibliography{nodal_mag}

\end{document}